\providecommand{\tabularnewline}{\\}
\newcommand{\lyxaddress}[1]{
\par {\raggedright #1
\vspace{1.4em}
\noindent\par}
}
\theoremstyle{plain}
\newtheorem{thm}{\protect\theoremname}
  \theoremstyle{definition}
  \newtheorem{defn}[thm]{\protect\definitionname}
  \theoremstyle{plain}
  \newtheorem{lem}[thm]{\protect\lemmaname}
\newcommand{\twopartdef}[4]{
\left\{
\begin{array}{ll}
#1 & \mbox{if } #2 \\
#3 & \mbox{if } #4
\end{array}
\right.
}
  \providecommand{\definitionname}{Definition}
  \providecommand{\lemmaname}{Lemma}
\providecommand{\theoremname}{Theorem}
\begin{document}

%\begin{frontmatter}{}

\title{{Centralities in Simplicial Complexes. Applications to Protein Interaction
Networks}}
%}

%\address{
\author{Ernesto Estrada and Grant Ross}
\maketitle

\lyxaddress{Department of Mathematics and Statistics, University of Strathclyde, 26 Richmond Street, Glasgow G1 1HX, UK}

\begin{abstract}
Complex networks can be used to represent complex systems which originate
in the real world. Here we study a transformation of these complex
networks into simplicial complexes, where cliques represent the simplices
of the complex. We extend the concept of node centrality to that of
simplicial centrality and study several mathematical properties of
degree, closeness, betweenness, eigenvector, Katz, and subgraph centrality
for simplicial complexes. We study the degree distributions of these
centralities at the different levels. We also compare and describe
the differences between the centralities at the different levels.
Using these centralities we study a method for detecting essential
proteins in PPI networks of cells and explain the varying abilities
of the centrality measures at the different levels in identifying
these essential proteins.
\end{abstract}

%\end{frontmatter}{}

\section{Introduction}

There is little doubt that the use of graphs and networks to represent
the skeleton of complex systems has been a successful paradigm. This
simple representation in which nodes of the graph accounts for the
entities of a complex system and the edges describe the interactions
between these entities captures many of the complex structural and
dynamical properties of the represented systems. However, such representation
is far from complete. One of its main drawbacks is its concentration
of binary relations only. That is, in a network the interaction between
entities occurs in a pairwise way. This excludes other higher-order
interactions involving groups of entities. Let us provide some examples.
Networks have been widely used to represent protein-protein interactions
(PPIs) where the nodes represent proteins and pairs of interacting
proteins are connected by edges of the network. These PPI networks
contain many triangles in which triples of proteins are considered
to be interacting to each other. Now, let us consider that there are
three proteins $A$, $B$ and $C$ that form a heterotrimer, that
is an $ABC$ complex in which the three proteins interact with each
other at the same time. The network-theoretic representation is not
able to differentiate this situation from the case where there are
three proteins $A$, $B$ and $C$ and they interact in a pairwise
manner, e.g. $AB$, $AC$, $BC$. The existence of heterotrimers in
well-documented, an example is the heterotrimeric $G$ protein formed
by the three proteins $G_{\alpha}$, $G_{\beta}$ and $G_{\gamma}$.
An attempt to amend this problem has been made by using hypergraphs,
also known as hypernetworks. In this case, the triple of proteins
form an hyper-edge which accounts for the simultaneous interaction
of the proteins in the complex. However, hypergraphs have a main drawback
when trying to capture all the subtleties of these complexes. For
instance, in the heterotrimeric $G$ protein, the proteins $G_{\beta}$
and $G_{\gamma}$ form a subcomplex known as $G_{\beta\gamma}$ which
is part of the trimeric form. This situation is not necessarily captured
by the hypergraph representation where hyperedges are not necessarily
closed under the subset operation. Examples of real-world systems
where this closure under the subset operation is required abound and
a very nice example provided by Maleti\'{c} and Rajkovi\'{c} \citep{maletic2012combinatorial}
according to them provided by Spivak\textemdash , where four people
have a chat in which everybody can listen to each other. Obviously,
the conversation is not pairwise as represented by the graph, and
is not only in the form of the hyper-edge represented by the hypergraph,
but a combination of the quadruple, triangles and edges. The best
way to represent such situations is by means of the so-called \textit{simplicial
complexes}.

Informally, a simplicial complex is a mathematical object, which originated
in algebraic topology and is a generalization of a network. Starting
with a set of nodes, instead of being limited to sets of size two,
the simplices can contain any number of nodes. A characteristic feature
of a simplex $S$ of a certain size is that all subsets of $S$ must
also be simplices. In this way simplicial complexes differ from hypergraphs.
For instance, if there is a simplex $\{1,3,4,6\}$ in a simplicial
complex then $\{1,3,4\},\{1,3,6\},\{1,4,6\},\{3,4,6\}$ are also simplices
in the simplicial complex. All subsets of those four simplices must
also be simplices in the complex. There is a recent interest in these
mathematical objects for representing complex systems and we should
mention here their applications to study brain networks \citep{giusti2016two,courtney2016generalized,lee2012persistent,petri2014homological,pirino2015topological},
social systems \citep{maletic2009simplicial} \citep{maletic2014consensus,kee2016information},
biological networks \citep{xia2014persistent} \citep{xia2015multidimensional,cang2015topological},
and infrastructural systems \citep{muhammad2006control} \citep{tahbaz2010distributed,de2007coverage,de2005blind,ghrist2005coverage}.

Centrality indices have been among the most successful tools used
for discovering structural and dynamical properties of networks. A
centrality index is a numeric quantification of the 'importance' of
a node in terms of its position, structural and/or dynamical, in the
network. Here, we extend this concept to simplicial complexes to capture
the relevance of a simplex of a given order in a simplicial complex.
In particular, we apply this extended concept to the study of properties
of protein-protein interaction (PPI) networks.

\section{Preliminaries}

Simplicial complexes have been much studied in the literature \citep{horak2009persistent,sizemore2016classification}
and definitions similar to those which appear in the preliminaries
section can be found elsewhere \citep{muhammad2006control,tahbaz2010distributed,muhammad2007decentralized,maletic2012combinatorial,goldberg2002combinatorial}.
However, we repeat them here to make this paper self-contained.

Let $V$ be a set of nodes or vertices. Then a $k$-simplex is a set
$\{v_{0},v_{1},\dots,v_{k}\}$ such that $v_{i}\in V$ and $v_{i}\neq v_{j}$
for all $i\neq j$. A face of a $k$-simplex is a $(k-1)$-simplex
of the form $\{v_{0},\dots,v_{i-1},v_{i+1},\dots,v_{k}\}$ for $0\leq i\leq k$.
A simplicial complex $C$ is a collection of simplices such that if
a simplex $S$ is a member of $C$ then all faces of $S$ are also
members of $C$. Less formally, a simplicial complex is a collection
of simplices such that if $\{v_{0},v_{1},$$\dots,v_{k}\}$ is a simplex
then all of its faces $\{v_{0},\dots,v_{i-1},v_{i+1},\dots,v_{k}\}$
are also simplices, and all of the faces of its faces $\{v_{0},\dots,v_{i-1},v_{i+1},\dots,v_{j-1},v_{j+1},\dots,v_{k}\}$
are also simplices, and so on down to the $0$-simplices, which are
formed just by the nodes. As mentioned previously, networks can be
realized as simplicial complexes. The nodes are the $0$-simplices
which are specified by the set $V$, while the edges are the $1$-simplices
and there are no higher order simplices. It is also possible to create
simplicial complexes from networks. In this work we will be interested
only in the kind of simplicial complexes defined below, which are
known as \textit{clique complexes}. A clique complex is a simplicial
complex formed from a network as follows. The nodes of the network
become the nodes of the simplicial complex. Let $X$ be a clique of
$k$ nodes in the network. Then, $X$ is a $(k-1)$-simplex in the
clique complex. As an example in Figure \ref{Simplicla complex1}
we illustrate a simplicial complex which has one $3$-simplex $\{1,2,3,4\}$,
seven $2$-simplices $\{1,2,3\},$$\{1,2,4\},$$\{1,3,4\},$$\{2,3,4\},$$\{3,4,5\},$$\{4,5,6\}$
and$\{$$6,$$7,$$8\}$. It also has fourteen $1$-simplices represented
by the edges and nine $0$-simplices which are usually known as the
nodes.

\begin{figure}
\begin{centering}

\includegraphics[width=0.65\textwidth]{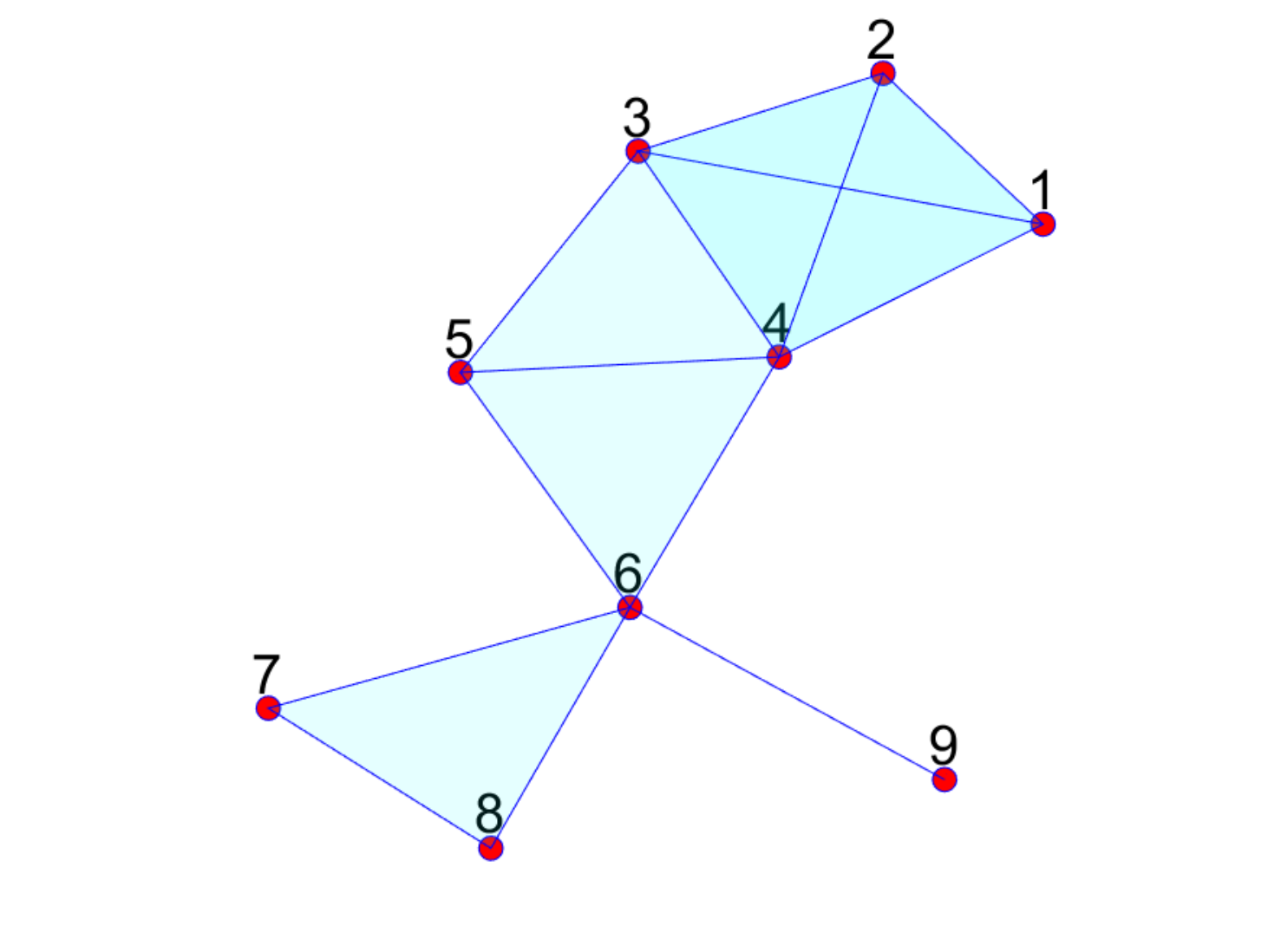}
\par\end{centering}
\caption{A simplicial complex with labeled nodes.}

\label{Simplicla complex1}
\end{figure}

In network theory it is fairly clear when two nodes are adjacent.
However, adjacency is less easy to define in simplicial complexes.
There are two ways in which two $k$-simplices $\sigma_{j}$ and $\sigma_{i}$
can be considered to be adjacent. We call them \textit{lower} and
\textit{upper adjacency}. Let $\sigma_{j}$ and $\sigma_{i}$ be two
$k$-simplices. Then, the two $k$-simplices are lower adjacent if
they share a common face. That is, for two distinct $k$-simplices
$\sigma_{j}=\{v_{0},v_{1},\dots,v_{k}\}$ and $\sigma_{i}=\{w_{0},w_{1},\dots,w_{k}\}$
then $\sigma_{j}$ and $\sigma_{i}$ are lower adjacent if and only
if there is a $(k-1)$-simplex $\tau=\{x_{0},x_{1},\dots,x_{k-1}\}$
such that $\tau\subset\sigma_{j}$ and $\tau\subset\sigma_{i}$. We
denote lower adjacency by $\sigma_{j}\smile\sigma_{i}$. For instance,
in the simplicial complex in Figure \ref{Simplicla complex1}, the
$1$-simplices $\{6,7\}$ and $\{6,9\}$ are lower adjacent because
the $0$-simplex $\{6\}$ is a common face of them and we can write
$\{6,7\}\smile\{6,9\}$. Similarly, $\{1,3,4\}\smile\{3,4,5\}$ are
lower adjacent as they share the common face $\{3,4\}$. However,
$\{4,5,6\}$ and $\{6,7,8\}$ are not lower adjacent because although
they have the common $0$-simplex $\{6\}$ they would need to share
a common $1$-simplex to be considered lower adjacent. Note that two
$0$-simplices can never be lower adjacent as we do not allow $\emptyset$
to be a $-1$-simplex. Let $\sigma_{j}$ and $\sigma_{i}$ be two
$k$-simplices. Then,the two $k$-simplices are upper adjacent if
they are both faces of the same common $(k+1)$-simplex. That is,
for $\sigma_{j}=\{v_{0},v_{1},\dots,v_{k}\}$ and $\sigma_{i}=\{w_{0},w_{1},\dots,w_{k}\}$
then $\sigma_{j}$ and $\sigma_{i}$ are upper adjacent if and only
if there is a $(k+1)$-simplex $\tau=\{x_{0},x_{1},\dots,x_{k+1}\}$
such that $\sigma_{j}\subset\tau$ and $\sigma_{i}\subset\tau$. We
denote the upper adjacency by $\sigma_{j}\frown\sigma_{i}$. In the
simplicial complex in Figure \ref{Simplicla complex1}, the $1$-simplices
$\{5,6\}$ and $\{4,6\}$ are upper adjacent because they are both
faces of the $2$-simplex $\{4,5,6\}$which is a common face of them.
So we can write $\{5,6\}\frown\{4,6\}$. Similarly, $\{1,3,4\}\frown\{2,3,4\}$
are upper adjacent as they are both faces of the $3$-simplex $\{1,2,3,4\}$.
However, $\{4,5,6\}$ is not upper adjacent to any other simplex as
it is not part of any $3$-simplices. Also note that $\{6\}\frown\{7\}$
are upper adjacent because they are both being faces of $\{6,7\}$.
So two $0$-simplices are upper adjacent if they are both faces of
a $1$-simplex which is identical to saying that two nodes are adjacent
if they are connected by an edge in the network theoretic sense. Hence
upper adjacency of $0$-simplices is the same as network theoretic
adjacency.

We shall now introduce some families of simplicial complexes which
shall be important later in the paper. Firstly, we introduce the family
denoted $S_{l}^{k}$. The simplicial complex $S_{l}^{k}$ consists
of a central $(k-1)$-simplex which is a face of every one of the
$l$ $k$-simplices. In addition, there are no other simplices except
those necessary by the closure axiom. For instance, $S_{l}^{2}$ would
consist of an edge $\{1,2\}$ and $l$ triangles of the form ${1,2,i}$
in addition to all subsimplices necessary by the closure axiom. While,
$S_{l}^{1}$ consists of a central node with $l$ pendant nodes connected
to it, which corresponds to the star graph in graph theory. The simplicial
complex $S_{5}^{2}$ is shown in Figure \ref{Families}(left).

Next we introduce a family of simplicial complexes labeled $t^{k}(x_{1},x_{2},\dots,x_{k+1})$
which consists of a central $k$-simplex with $x_{1}$ $k$-simplices
lower adjacent through one face, $x_{2}$ $k$-simplices lower adjacent
through another, and so on. A $k$-simplex which is lower adjacent
to the central $k$-simplex can only be lower adjacent to other $k$-simplices
which are lower adjacent to the central $k$-simplex through the same
face as itself. There are no other simplices except those necessary
by the closure axiom. One member of this family of simplices, $t^{2}(1,2,4)$
is shown in Figure \ref{Families}(center).

The final family of simplicial complexes which we shall introduce
are denoted $P_{l}^{k}$, consisting of a $k$-simplex at one end
which is only adjacent to one other $k$-simplex. This one is only
lower adjacent to the first $k$-simplex and another $k$-simplex,
and so on until arriving at another end $k$-simplex. In addition,
there are $l$ $k$-simplices in the simplicial complex and no other
simplices except those necessary by the closure axiom. Note that a
simplicial complex $P_{l}^{1}$ is the same as a path graph in the
traditional network theory. The simplicial complex$P_{5}^{2}$ is
illustrated in Figure \ref{Families}(right).

\begin{figure}
\subfloat[]{\begin{centering}
\includegraphics[width=0.33\textwidth]{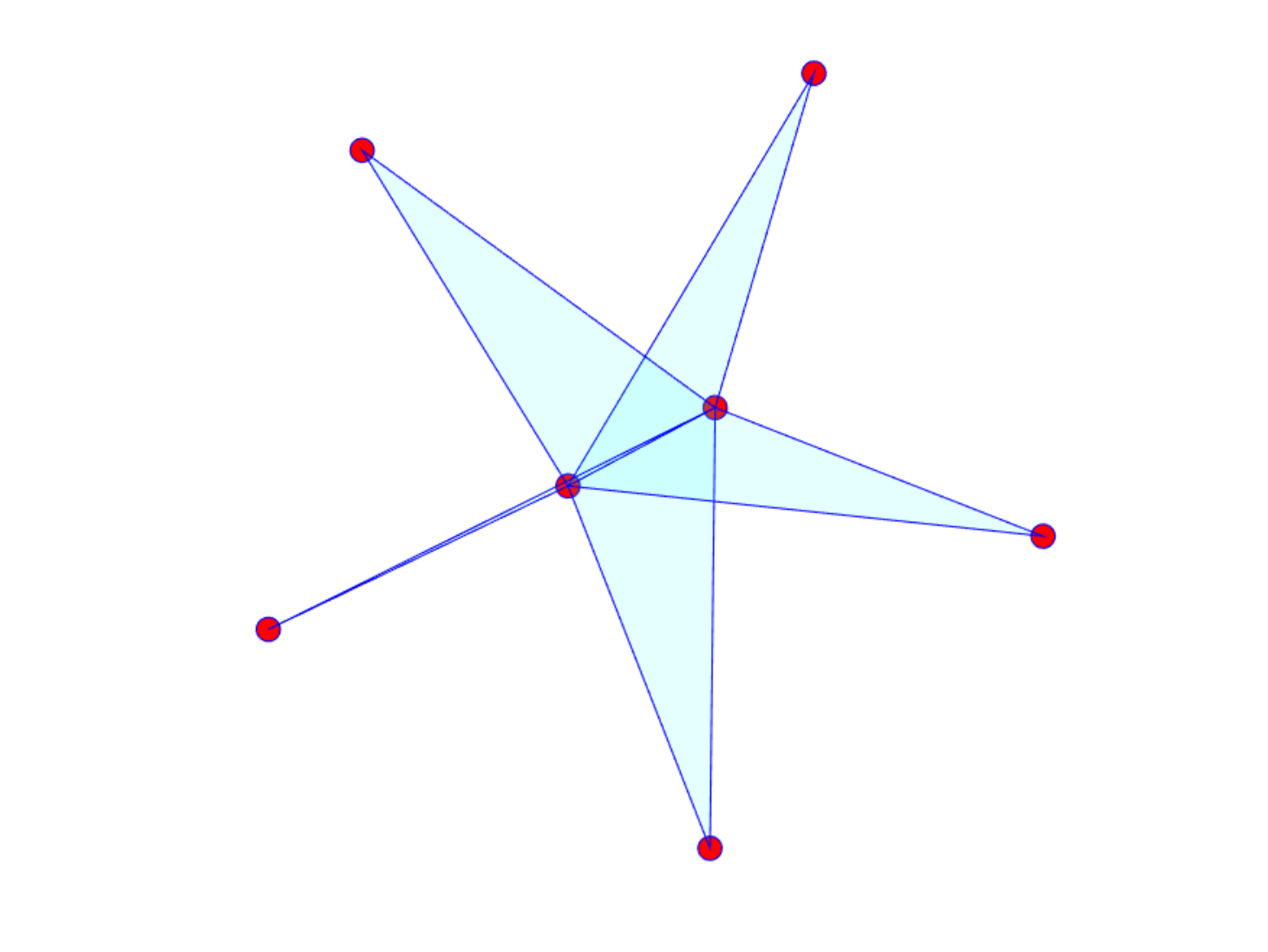}
\par\end{centering}
}\subfloat[]{\begin{centering}
\includegraphics[width=0.33\textwidth]{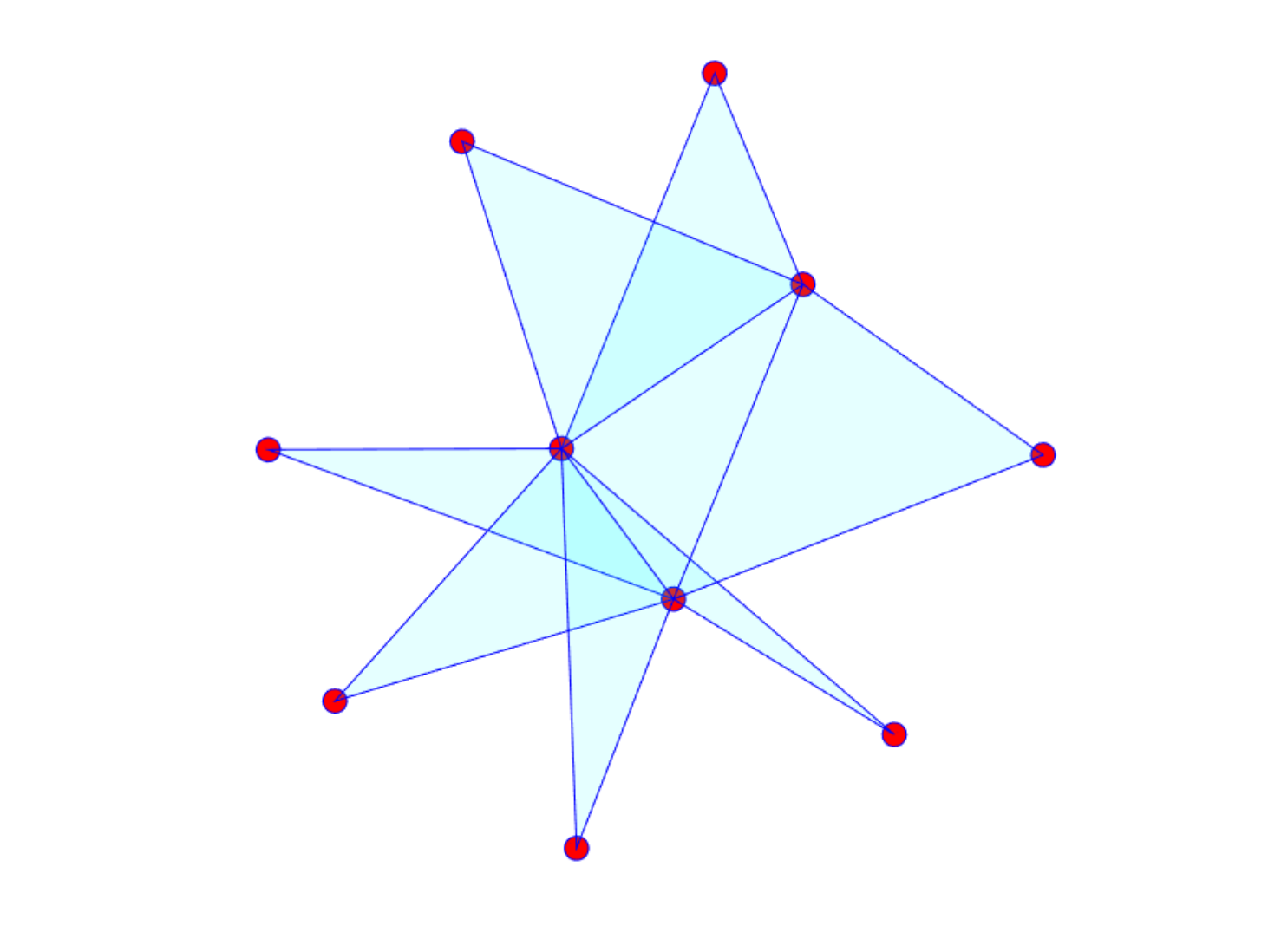}
\par\end{centering}
}\subfloat[]{\begin{centering}
\includegraphics[width=0.33\textwidth]{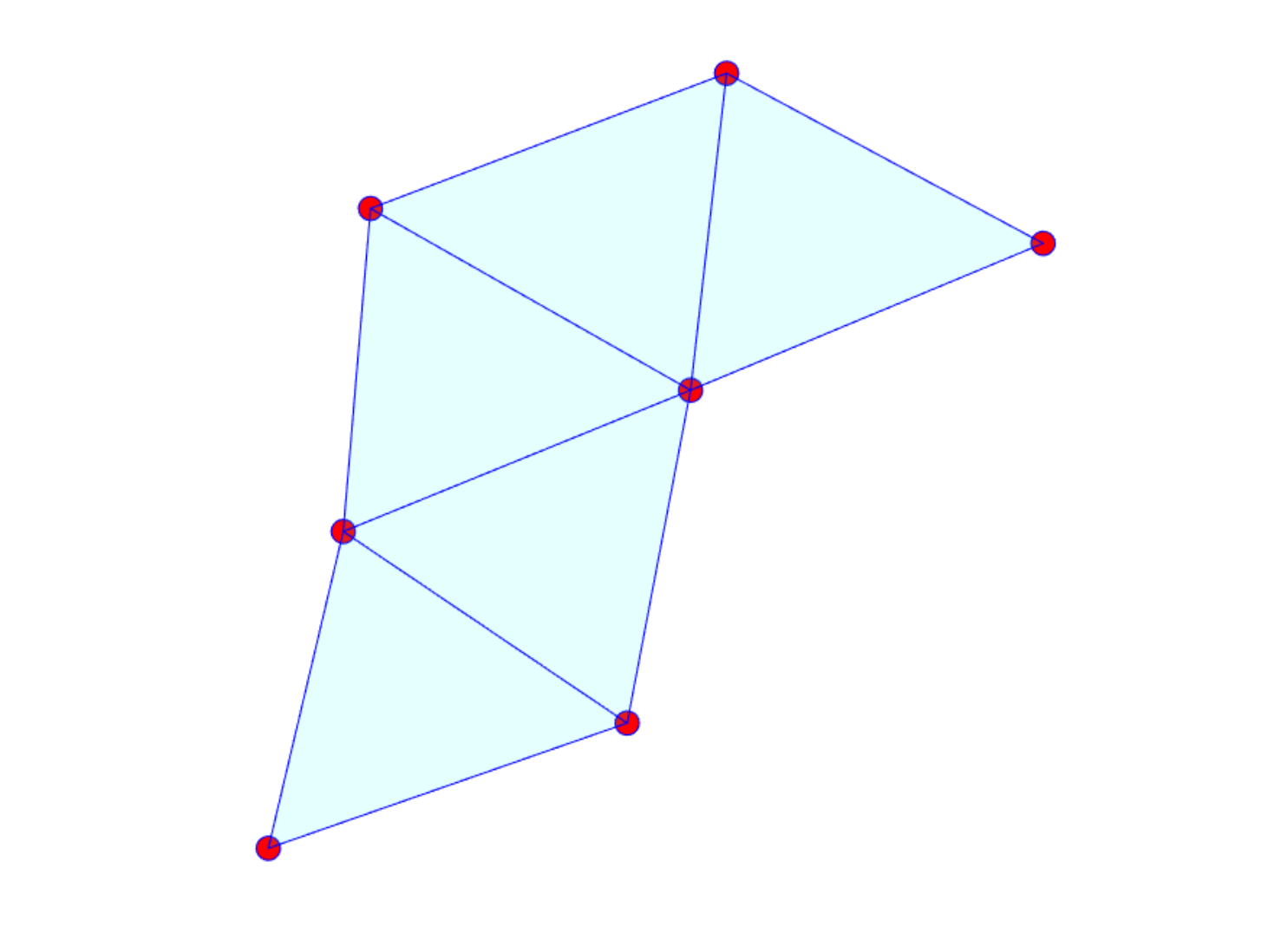}
\par\end{centering}
}

\caption{Illustration of the simplicial complexes $S_{5}^{2}$ (a), $t^{2}(1,2,4)$
(b) and $P_{5}^{2}$ (c). See text for definitions and notation.}

\label{Families}
\end{figure}

\section{Adjacency Matrices in Simplicial Complexes}

The goal of this section is to define a general adjacency matrix for
a simplicial complex that allows us to define general centrality indices
for these mathematical objects. Based on the previous definitions
of lower and upper adjacency relations we define the corresponding
adjacency matrices here.
\begin{defn}
Let $i$ and $j$ be two $k$-simplices in a simplicial complex. Then,
the \textit{lower adjacency matrix} $A_{l}^{k}$ at the $k$-level
in the simplicial complex has entries defined by

\[
(A_{l}^{k})_{ij}=\twopartdef{1}{\sigma_{i}\smile\sigma_{j}}{0}{\sigma_{i}\nsmile\sigma_{j}\text{ or }i=j},
\]
where the subindex $l$ indicates lower adjacency.
\end{defn}
In a similar way we have the following.
\begin{defn}
Let $i$ and $j$ be two $k$-simplices in a simplicial complex. Then,
the \textit{upper adjacency matrix} $A_{u}^{k}$ at the $k$-level
in the simplicial complex has entries defined by

\[
(A_{u}^{k})_{ij}=\twopartdef{1}{\sigma_{i}\frown\sigma_{j}}{0}{\sigma_{i}\nfrown\sigma_{j}\text{ or \ensuremath{i=j}}},
\]
where the subindex $u$ indicates upper adjacency.
\end{defn}
If two distinct $k$-simplices $\sigma_{i}$ and $\sigma_{j}$ are
upper adjacent then there exists some $(k+1)$-simplex $\tau=\{v_{0},v_{1},\dots v_{k+1}\}$
such that $\sigma_{i}\subset\tau$ and $\sigma_{j}\subset\tau$. Without
loss of generality we have $\sigma_{i}=\{v_{1},v_{2},\dots v_{k+1}\}$
and $\sigma_{j}=\{v_{0},v_{2},v_{3},\dots v_{k+1}\}$ then $\|\sigma_{i}\cap\sigma_{j}\|=\|\{v_{1},v_{2},\dots v_{k+1}\}\cap\{v_{0},v_{2},v_{3},\dots v_{k+1}\}\|=k-1$.
This means that $\sigma_{i}$ and $\sigma_{j}$ are also lower adjacent.
An alternative proof of this can be found in \citep{goldberg2002combinatorial}.

The above two definitions for two $k$-simplices to be adjacent leads
us to the problem that there are now four possible notions we can
use to define a general adjacency matrix for simplicial complexes.
The four possibilities are $A_{l}^{k},A_{u}^{k},A_{l}^{k}-A_{u}^{k},A_{l}^{k}+A_{u}^{k}$.
Each of these possible definitions of adjacency have pros and cons
as we explain in the next paragraph.

Simply using the lower adjacency matrix $A_{l}^{k}$ does not isolate
the effects of $k$-simplices from higher order simplices. In particular,
for $1$-simplices the lower adjacency matrix $A_{l}^{1}$ simply
describes the line graph of the network. The line graph is a transformation
of the graph in which the nodes of the line graph are the edges of
the graph, and two nodes of the line graph are connected if the corresponding
edges in the graph are incident to a common node. On the other hand,
using the upper adjacency matrix $A_{u}^{k}$ would ignore the effects
of any $k$-simplices which are not faces of higher simplices, meaning
that there is potential for a lot of information to be missed. For
instance, there could be many $2$-simplices (triangles) in a network
but not necessarily so many $3$-simplices, then the upper adjacency
matrix does not identify any of them as adjacent to each other. It
is worth noting that the traditional adjacency matrix of a network
corresponds to $A_{u}^{0}$ although two $0$-simplices cannot be
lower adjacent. Using the sum of the two adjacency matrices, would
emphasize the effects of the higher simplices over the lower ones.
However, it would lead to an adjacency matrix which features $2$'s
where two simplices are upper adjacent. What we want is an adjacency
matrix which indicates when two simplices are adjacent or not. Thus
this would not be appropriate. This leaves us with the difference
of the two adjacency matrices $A_{l}^{k}-A_{u}^{k}$ as our notion
of general adjacency.
\begin{defn}
For $k\geq1$ we have that two $k$-simplices are considered adjacent
if they are both lower adjacent and not upper adjacent. For $k=0$
two simplices shall be adjacent if they are upper adjacent. We shall
denote two $k$-simplices, $t_{i},t_{j}$ to be adjacent in the way
defined here by $t_{i}\backsim t_{j}$.
\end{defn}
This definition allows us to remove most of the effects of higher
simplices being adjacent in the adjacency matrix at the lower simplex
levels. A consequence of this is that it allows us to analyze the
relationships between the centralities of simplices and their faces
which we are particularly interested in at the node level. Secondly,
this notion of adjacency lines up nicely with the extensively studied
higher order Laplacians of simplicial complexes \citep{muhammad2006control}.
An off-diagonal entry of the higher order Laplacian matrix is non
zero if and only if the corresponding off-diagonal entry of $A_{l}^{k}-A_{u}^{k}$
is non-zero. This is the definition that shall be used in the rest
of this work. Further information on the Hodge Laplacian matrices
can be found in \citep{muhammad2006control,tahbaz2010distributed,muhammad2007decentralized,maletic2012combinatorial,goldberg2002combinatorial}.
Then we have the following important definition of adjacency matrix
of the simplicial complex.
\begin{defn}
Let $i$ and $j$ be two $k$-simplices in a simplicial complex. Then,
for $k\geq1$ the\textit{ adjacency matrix} $A^{k}$ at the $k$-level
in the simplicial complex has entries defined by

\[
(A^{k})_{ij}=\twopartdef{1}{\sigma_{i}\smile\sigma_{j}\text{ and }\sigma_{i}\nfrown\sigma_{j}}{0}{i=j\text{ or }\sigma_{i}\nsmile\sigma_{j}\text{ or }\sigma_{i}\frown\sigma_{j}},
\]
for $k=0$ the adjacency matrix shall be given by the upper adjacency
matrix.
\end{defn}

\section{Simplicial Shortest Path Distance}

In this section we will extend the concept of shortest path distance
to the different levels of a simplicial complex. We start by extending
the concept of walks to simplicial complexes.
\begin{defn}
Let $k\geq1$. Then, a $s^{k}$-walk is a sequence of alternating
$k$-simplices and $(k-1)$-simplices $s_{1},e_{1},s_{2},e_{2},\dots,e_{r-1},s_{r}$
such that for each $i\in\{1,\dots,r-1\}$ $e_{i}$ is a face of both
$s_{i}$ and $s_{i+1}$, and $s_{i}$ and $s_{i+1}$ are not both
faces of the same $(k+1)$-simplex. For $k=0$ a walk on the $0$-simplices
is just a walk in the normal graph-theoretic sense.
\end{defn}
On the simplicial complex from Figure \ref{Simplicla complex1}, we
have that $\{1,3,4\},\{3,4\},\{3,4,5\},\{4,5\},$ $\{4,5,6\},\{4,5\},\{3,4,5\},\{3,4\},\{2,3,4\}$
is an $s^{2}$-walk. Meanwhile, $\{6,9\},\{6\},\{6,7\},\{6\},$ $\{5,6\},\{5\},\{3,5\},\{3\},\{2,3\}$
is an $s^{1}$-walk.
\begin{defn}
A $s^{k}$-shortest path between two $k$-simplices $s_{a},s_{b}$
is a $s^{k}$-walk, $s_{a},e_{1},s_{2},e_{2},$ $\dots,s_{n},e_{n},s_{b}$,
such that $n$ is minimized. The value $n$ is the $s^{k}$-shortest
path length between the two $k$-simplices $s_{a},s_{b}$. We denote
this $d(s_{a},s_{b})=n$.
\end{defn}
It can be easily seen that the simplicial shortest path length between
two $k$-simplices is a proper distance. By definition $d(s_{a},s_{b})\geq0$
for all $s_{a},s_{b}\in R_{k}$where $R_{k}$ is the set of $k$-simplices.
Clearly $d(s_{a},s_{b})=0\Longleftrightarrow s_{a}=s_{b}$. To prove
$d(s_{a},s_{b})=d(s_{b},s_{a})$ then assume $d(s_{a},s_{b})=n$ then
the $s^{k}$-shortest path from $s_{a}$ to $s_{b}$ is of the form
$s_{a},e_{1},s_{2},e_{2},$ $\dots,s_{n-1},e_{n-1},s_{n},e_{n},s_{b}$.
This means that there is a $s^{k}$-walk from $s_{b}$to $s_{a}$
of the form $s_{b},e_{n},$ $s_{n},e_{n-1},s_{n-1},$ $\dots,e_{2},s_{2}$
$,e_{1},s_{a}$. We can then relabel $e_{1}\rightarrow e_{n},s_{2}\rightarrow s_{n},e_{2}\rightarrow e_{n-1},s_{3}\rightarrow s_{n-1},\dots,e_{n}\rightarrow e_{1}$
and so on to give a $s^{k}$-walk from $s_{b}$to $s_{a}$ of the
form $s_{b},e_{1},s_{2},e_{2},$ $\dots,s_{n-1},e_{n-1},s_{n},e_{n},s_{a}$
thus $d(s_{b},s_{a})\leq n$. If there was a $s^{k}$-walk shorter
than this then there would also be a $s^{k}$-walk from $s_{a}$ to
$s_{b}$ which was shorter than the original walk by symmetric arguments
thus $d(s_{b},s_{a})=n$ and $d(s_{a},s_{b})=d(s_{b},s_{a})$. To
prove $d(s_{a},s_{c})\leq d(s_{a},s_{b})+d(s_{b},s_{c})$ let $d(s_{a},s_{b})=n$
and $d(s_{b},s_{c})=m$ then there is a $s^{k}$-walk from $s_{a}$
to $s_{b}$ of the form $s_{a},e_{1},s_{2},e_{2},$ $\dots,s_{n-1},e_{n-1},s_{n},e_{n},s_{b}$
and $s^{k}$-walk from $s_{b}$ to $s_{c}$ of the form $s_{b},e_{1},s_{2},e_{2},$
$\dots,s_{m-1},e_{m-1},s_{m},e_{m},s_{c}$ we can combine these and
relabel the simplices in the second walk by the rules $s_{b}\rightarrow s_{n+1,}e_{i}\rightarrow e_{n+i},s_{i}\rightarrow s_{n+i}$
to form a $s^{k}$-walk from $s_{a}$ to $s_{c}$ of the form $s_{a},e_{1},s_{2},e_{2},$
$\dots,s_{n-1},e_{n-1},s_{n},e_{n},s_{n+1}$ $e_{n+1},s_{n+2},e_{n+2},$
$\dots,s_{n+m-1},e_{n+m-1},s_{n+m},e_{n+m},s_{c}$. This implies that
$d(s_{a},s_{c})\leq n+m=d(s_{a},s_{b})+d(s_{b},s_{c})$. For instance,
on the simplicial complex from Figure \ref{Simplicla complex1}, we
have that $\{1,3,4\},\{3,4\},$ $\{3,4,5\},\{3,4\},$ $\{2,3,4\}$
is a $s^{2}$-shortest path from $\{1,3,4\}$ to $\{2,3,4\}$ and
we have $d(\{1,$ $3,4\},$ $\{2,3,4\})=2$. Meanwhile, $\{2,4\},\{4\},\{4,6\},\{6\},\{6,7\}$
is a $s^{1}$-shortest path between $\{2,4\}$ and $\{6,7\}$ and
we have $d(\{2,4\},\{6,7\})=2$.
\begin{defn}
A simplicial complex is $s^{k}$-connected if and only if there does
not exist a pair of $k$-simplices $s_{a},s_{b}\in R_{k}$, where
$R_{k}$ is the set of $k$-simplices, such that $d(s_{a},s_{b})=\infty$.
\end{defn}
Note that a simplicial complex being $s^{k}$-connected does not mean
that it is $s^{k+1}$-connected or $s^{k-1}$-connected. The simplicial
complex in Figure \ref{Simplicla complex1} is $s^{0}$-connected
but not $s^{1}$-connected because $\{1,2\}$ and $\{7,8\}$ are not
adjacent to any of the other $1$-simplices. Many of the real world
networks we will introduce in a later section are $s^{1}$-connected
but not $s^{2}$-connected. In addition, a simplicial complex from
the family $S_{l}^{k}$ is $s^{k}$-connected but it is not $s^{k-1}$-connected.
The central $(k-1)$-simplex is upper adjacent to every other $(k-1)$-simplex
and hence is not adjacent to any of them.
\begin{defn}
An $s^{k}$-connected component of a simplicial complex is a subset
$S_{k}$ of the $k$-simplices $R_{k}$ such that for any two $k$-simplices
$s_{a},s_{b}\in S_{k}$ we have $d(s_{a},s_{b})<\infty$ and for any
$s\in S_{k}$ and $r\in R_{k}\setminus S_{k}$ we have that $d(s,r)=\infty$.

The $s^{k}$-eccentricity $\epsilon(t)$ of a $k$-simplex $s$ is
the largest $s^{k}$-shortest path distance between $s$ and any other
$k$-simplex. The $s^{k}$-diameter $D$ of a simplicial complex is
the maximum $s^{k}$-eccentricity of any simplex in the network $D=\max_{s\in R}\epsilon(s)$
where $R_{k}$ is the set of $k$-simplices. As an example, in the
simplicial complex $t^{2}(1,2,4)$, depicted in Figure \ref{Families},
the central $2$-simplex has $s^{2}$-eccentricity $1$ because it
is adjacent to all the other $2$-simplices in the complex. However
all the peripheral $2$-simplices have a $s^{2}$-eccentricity of
$2$ because the shortest path form a peripheral $2$-simplex on one
arm to a peripheral $2$-simplex on another is through the central
$2$-simplex for a shortest path of length $2$. This means that $t^{2}(1,2,4)$
has $s^{k}$-diameter $2$.
\end{defn}
Given a notion of shortest path distance we are now equipped to define
the average simplicial shortest path distance. The $s^{k}$-average
simplicial shortest path length is the average $s^{k}$-shortest path
distance for all possible $k$-simplices in the network
\begin{equation}
l_{k}=\frac{2\sum_{i<j}d_{k}(s_{i},s_{j})}{\|R_{k}\|(\|R_{k}\|-1)},
\end{equation}
where $R_{k}$ is the set of $k$-simplices in the network and $d_{k}(s_{i},s_{j})$
is the $s^{k}$-shortest path distance between $s_{i}$ and $s_{j}$.
Note for this measure to make any sense the simplicial complex needs
to be $s^{k}$-connected. If the simplicial complex is not $s^{k}$-connected
then we can analyze each $s^{k}$-connected component separately.
We will now prove bounds on the $s^{k}$-average path length. If we
assume that there are at least two $k$-simplices in the simplicial
complex. For $l_{k}$ to be less than $1$ there would need to be
two $k$-simplices, $s_{i},s_{j}$ such that $d(s_{i},s_{j})<1$ this
would imply $d(s_{i},s_{j})=0$ and hence $s_{i}=s_{j}$ by the properties
of a metric. The lower bound $l_{k}=1$ is achieved by a simplicial
complex of the form $S_{r}^{k}$. This is easy to check. A simplicial
complex of the form $S_{r}^{k}$ consists of a $(k-1)$-simplex $\{1,2,\dots,k\}$
and some $k$-simplices of the form $\{1,2,\dots,k,i\}$, where $i>k$,
in addition to all subsimplices necessary by the closure axiom. Hence,
all $k$-simplices are lower adjacent to each other by the $(k-1)$-simplex
$\{1,2,\dots,k\}$ and they are not upper adjacent to each other because
there are no $(k+1)$-simplices. Thus, every $k$-simplex is adjacent
to every other $k$-simplex and the $s^{k}$-shortest path distance
between any two $k$-simplices is $1$. Hence, the $s^{k}$-average
path length is $1$, which implies that the lower bound of $l_{k}$
is $1$.

A general upper bound of $l_{k}$ is hard to establish due to of the
dependence on the number of simplices, $\|R_{k}\|$. However, if we
fix both $k$ and $\|R_{k}\|$ then we can prove the following result.
\begin{lem}
Let $\|R_{k}\|$ be the number of $k$-simplices. Then, the upper
bound of $l_{k}$ is

\begin{equation}
\frac{\frac{(\|R_{k}\|-1)\|R_{k}\|(\|R_{k}\|+1)}{3}}{\|R_{k}\|(\|R_{k}\|-1)}=\frac{\|R_{k}\|+1}{3}.
\end{equation}
\end{lem}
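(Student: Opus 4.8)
The plan is to recognize that the level-$k$ shortest-path metric is really an ordinary graph metric, and then to invoke the classical extremal fact that, among connected graphs on a fixed number of vertices, the path maximizes the sum of all pairwise distances (the Wiener index). Write $n=\|R_{k}\|$. First I would note that the adjacency relation $\backsim$ makes the set $R_{k}$ of $k$-simplices into an ordinary graph $G$ (vertices $=$ $k$-simplices, edges $=$ pairs of adjacent $k$-simplices). By the definition of an $s^{k}$-walk, consecutive $k$-simplices in such a walk are exactly $\backsim$-adjacent, i.e.\ consecutive vertices along a walk in $G$; conversely every walk in $G$ gives an $s^{k}$-walk. Hence $d_{k}(s_{i},s_{j})$ equals the ordinary graph distance $d_{G}(s_{i},s_{j})$, and $G$ is connected precisely because the complex is $s^{k}$-connected (for $k=0$ this is immediate since $A^{0}$ is the usual adjacency matrix). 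Therefore $\sum_{i<j}d_{k}(s_{i},s_{j})=W(G)$, the Wiener index of $G$.

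Next I would bound $W(G)$. Passing to a spanning tree $T$ of $G$ only deletes edges, which cannot decrease any distance, so $d_{T}(u,v)\ge d_{G}(u,v)$ for all $u,v$ and hence $W(G)\le W(T)$; and among all trees on $n$ vertices the path $P_{n}$ has the largest Wiener index, by the standard leaf-shifting argument (if a tree has a vertex of degree $\ge 3$, detaching one branch at that vertex and reattaching it at an endpoint of a longest path strictly increases $W$, so the maximum is attained only at $P_{n}$). Thus $\sum_{i<j}d_{k}(s_{i},s_{j})\le W(P_{n})$, and a direct computation gives
\[
W(P_{n})=\sum_{1\le i<j\le n}(j-i)=\sum_{d=1}^{n-1}d\,(n-d)=\frac{n(n-1)(n+1)}{6}.
\]

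Finally I would check that the bound is attained, so that it is the exact maximum. For the complex $P_{n}^{k}$ there are no $(k+1)$-simplices, so no two $k$-simplices are upper adjacent, each $k$-simplex is lower adjacent only to its path-neighbours, and the adjacency graph $G$ is literally $P_{n}$; since there are no $(k+1)$-simplices, every walk in $P_{n}$ lifts to an $s^{k}$-walk, so $d_{k}=d_{P_{n}}$ on $P_{n}^{k}$ and $\sum_{i<j}d_{k}(s_{i},s_{j})=W(P_{n})$. Plugging into the definition of $l_{k}$ then gives
\[
l_{k}\le \frac{2\,W(P_{n})}{n(n-1)}=\frac{2\cdot\frac{n(n-1)(n+1)}{6}}{n(n-1)}=\frac{\frac{(\|R_{k}\|-1)\|R_{k}\|(\|R_{k}\|+1)}{3}}{\|R_{k}\|(\|R_{k}\|-1)}=\frac{\|R_{k}\|+1}{3},
\]
with equality realized by $P_{\|R_{k}\|}^{k}$, which is exactly the asserted bound. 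The step I expect to be the real content is the extremal claim behind the second paragraph — the reduction to graph distances together with the ``path maximizes the Wiener index among connected graphs on $n$ vertices'' lemma; the distance computation and the verification that $P_{n}^{k}$ is extremal are routine.
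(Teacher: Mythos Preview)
Your argument is correct, but it proceeds differently from the paper. The paper never names the Wiener index or passes to a spanning tree; instead it runs a direct induction on $\|R_{k}\|$: it fixes a $k$-simplex $s_{1}$, argues that $\sum_{j}d_{k}(s_{1},s_{j})\le 1+2+\cdots+(\|R_{k}\|-1)=T_{\|R_{k}\|-1}$ with equality only when $s_{1}$ has a unique $\backsim$-neighbour, then removes such a leaf $s_{1}$ (which does not change the remaining distances), applies the inductive bound to the smaller complex, and adds $T_{n}$ back. In effect the paper re-derives from scratch the fact that the path maximises the total pairwise distance, whereas you import that fact as a black box via the spanning-tree reduction and the standard ``path maximises the Wiener index among trees on $n$ vertices'' extremal lemma. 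Your route is cleaner and more modular: once one observes (as the paper itself does later, under the name ``underlying network of simplices'') that $d_{k}$ is an honest graph metric, the problem is pure graph theory and one can quote the classical result; the paper's induction is more self-contained but somewhat informal about why the extremal configuration must contain a leaf. Both arrive at the same extremal example $P_{\|R_{k}\|}^{k}$ and the same value $\frac{\|R_{k}\|+1}{3}$.
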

\begin{proof}
Assume that the simplicial complex is $s^{k}$-connected and that
$\|R_{k}\|\geq2$. If $\|R_{k}\|=2$ then $\sum_{i<j}d_{k}(s_{i},s_{j})=1$,
the simplicial complex is $s^{k}$-connected and there are only 2
$k$-simplices hence they must be adjacent. Thus $l_{k}=\frac{2\sum_{i<j}d_{k}(s_{i},s_{j})}{\|R_{k}\|(\|R_{k}\|-1)}=1$.
In addition $\frac{\|R_{k}\|+1}{3}=1$. Hence the lemma holds for
$\|R_{k}\|=2$. Assume that the Lemma holds for $\|R_{k}\|\leq n$.
Let $\|R_{k}\|=n+1$ then to maximize $l_{k}$ we need to maximize
$\sum_{i<j}d_{k}(s_{i},s_{j})$. Pick a $k$-simplex $s_{1}$. First,
we will maximize $\sum_{j}d_{k}(s_{1},s_{j})$. For $d_{k}(s_{1},s_{j})=y$
for some $s_{j}\in R_{k}$, first it must be the case that $d_{k}(s_{1},s_{m})=y-1$
for some $s_{m}\in R_{k}$ such that $s_{m}\backsim s_{j}$. This
means that the largest possible value of $d_{k}(s_{1},s_{j})$ for
some $s_{j}\in R_{k}$ is $\|R_{k}\|-1=n$. This gives $\max\sum_{j}d_{k}(s_{1},s_{j})=(\|R_{k}\|-1)+(\|R_{k}\|-2)+\dots+1=T_{\|R_{k}\|-1}=T_{n}$
where $T_{z}$ represents the $z$th triangle number. Now this implies
that there is only one $k$-simplex, $s_{a}\in R_{k}$ such that $d_{k}(s_{a},s_{1})=1$.
This means that $s_{1}$ is adjacent to precisely one other $k$-simplex,
namely $s_{a}$. Because $s_{1}$ is adjacent to only one other simplex,
$s_{1}$ can be removed without affecting the $s^{k}$-shortest path
distances between any other $k$-simplices. We now have a simplicial
complex such that $\|R_{k}\|=n$. We know that the upper bound of
the $s^{k}$-average path distance for this smaller simplicial complex
is $\frac{\|R_{k}\|+1}{3}=\frac{n+1}{3}$ by assumption where $\frac{(n-1)n(n+1)}{6}$
is the contribution given by $\sum_{i<j}d_{k}(s_{i},s_{j})$. We also
know that the largest number we can add to the sum $\sum_{i<j}d_{k}(s_{i},s_{j})$
by the addition of a $k$-simplex is given by $T_{n}=\frac{n(n+1)}{2}$.
Thus $\max(\sum_{i<j}d_{k}(t_{i},t_{j}))=\frac{(n-1)n(n+1)}{6}+\frac{n(n+1)}{2}=\frac{n(n+1)(n+2)}{6}$.
This means that the upper bound of $l_{k}$ is $\frac{\frac{(\|R_{k}\|-1)\|R_{k}\|(\|R_{k}\|+1)}{3}}{(\|R_{k}\|-1)\|R_{k}\|}=\frac{\|R_{k}\|+1}{3}$.
Clearly as $\|R_{k}\|\rightarrow\infty$, $l_{k}\rightarrow\infty$
and so there is no upper bound for $l_{k}$. It should be fairly clear
that a simplicial complex of the form $P_{r}^{k}$ will achieve this
bound.
\end{proof}

\section{Simplicial Centralities}

\subsection{Centralities based on simplicial shortest-path}

We are now in a position to generalize some centrality notions for
simplices which are based on the simplicial shortest path distance.
The simplest of all centrality measures is the degree. In the case
of the simplicial complexes we have three levels of degrees, which
we will designate as $\delta_{k}\left(i\right)$, where $k=0,1,2$
is the level of the simplex, i.e., nodes, edges and triangles, respectively,
and $i$ is the corresponding simplex. The degree of a $k$-simplex
$s$ is the number of other $k$-simplices to which $s$ is adjacent.
If $p\left(\delta_{k}\right)$ is the probability of finding a $k$-simplex
of degree $\delta_{k}$ in a simplicial complex and $P\left(\delta_{k}\right)$
is the probability of finding a $k$-simplex of degree larger or equal
than $\delta_{k}$ in the simplicial complex, then the degree distribution
of the $k$-simplices is the probability distribution of the degrees
of the $k$-simplices across the whole of the simplicial complex.

Closeness centrality is a concept first introduced by Bavelas \citep{bavelas1950communication}
to capture the idea of how close\textemdash in terms of shortest path
distance\textemdash two nodes are in a network. Here we will generalize
this concept to simplicial complexes. The simplicial farness of a
$k$-simplex $F$ is the sum of its $s^{k}$-shortest path distances
to all other $k$-simplices, $\sum_{Y\neq F}d(Y,F)$. The simplicial
closeness is the reciprocal of simplicial farness. That is

\begin{equation}
C(F)=\frac{1}{\sum_{Y\neq F}d(Y,F)}
\end{equation}

Note that if the simplicial complex is not $s^{k}$-connected then
$\sum_{Y\neq F}d(Y,F)$ could be considered undefined or $\infty$
for all $k$-simplices in the simplicial complex. In this case we
can calculate simplicial harmonic closeness instead. This is a generalization
of a definition that can be found in \citep{rochat2009closeness}.
The simplicial harmonic closeness of a $k$-simplex $F$ is defined
as follows

\begin{equation}
H(F)=\sum_{Y\neq F}\frac{1}{d(Y,F)},
\end{equation}
where we treat $\frac{1}{\infty}=0$.

We would now like to establish some bounds on the simplicial closeness
centrality. However, there is an issue that needs to be considered
before bounds can be established. The issue is that the sum, $\sum_{Y\neq F}d(Y,F)$
depends on the number of simplices in the complex. If for all $Y\in R_{k}$
and $Y\neq F$, where $R_{k}$ is the set of $k$-simplices, $d(Y,F)=1$
then $C(F)=\frac{1}{\|R_{k}\|-1}$. Clearly this is the largest $C(F)$
can be for $\|R\|$ $k$-simplices. We can normalize this by multiplying
$C(F)$ by $(\|R\|-1)$ to give an upper bound of $C(F)=1$.
\begin{lem}
The upper bound of the normalized simplicial closeness centrality
can be attained by all simplices in a simplicial complex of the form
$S_{l}^{k}$.
\end{lem}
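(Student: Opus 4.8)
The plan is to show that in a simplicial complex of the form $S_l^k$, every $k$-simplex achieves the normalized closeness value $C(F)\cdot(\|R_k\|-1)=1$, which we already know to be the maximum possible value for any simplicial complex with $\|R_k\|$ $k$-simplices. Since this upper bound was established in the paragraph immediately preceding the statement, it suffices to verify that each $k$-simplex in $S_l^k$ is at $s^k$-shortest path distance exactly $1$ from every other $k$-simplex.

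First I would recall the structure of $S_l^k$: it consists of a central $(k-1)$-simplex $\{1,2,\dots,k\}$ together with $l$ $k$-simplices of the form $\{1,2,\dots,k,i\}$ for $i>k$, plus all subsimplices forced by the closure axiom, and nothing else. Here $\|R_k\|=l$. Next I would observe that any two distinct $k$-simplices $\sigma_i=\{1,\dots,k,i\}$ and $\sigma_j=\{1,\dots,k,j\}$ share the common face $\{1,2,\dots,k\}$, so they are lower adjacent. I would then argue they are \emph{not} upper adjacent: upper adjacency would require a common $(k+1)$-simplex containing both, but $S_l^k$ contains no $(k+1)$-simplices by construction. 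Hence, by the Definition of the adjacency matrix $A^k$ (for $k\geq1$), $\sigma_i\backsim\sigma_j$, so $d(\sigma_i,\sigma_j)=1$. For the case $k=0$ I would note that $S_l^0$ — or rather $S_l^1$ as the star graph at level $0$ — needs separate handling, but since the family $S_l^k$ is defined for a central $(k-1)$-simplex, the relevant cases are $k\geq1$; alternatively one simply remarks that the node-level bound argument is the classical star-graph fact.

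With $d(\sigma_i,\sigma_j)=1$ for all pairs, the simplicial farness of any $k$-simplex $F$ is $\sum_{Y\neq F}d(Y,F)=\|R_k\|-1=l-1$, so $C(F)=\frac{1}{l-1}$ and the normalized value $(\|R_k\|-1)\,C(F)=1$. This matches the upper bound, so the bound is attained by every simplex of $S_l^k$, as claimed.

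The only genuine subtlety — and thus the main thing to get right — is the verification that no two of these $k$-simplices are upper adjacent, since upper adjacency would force the adjacency matrix entry $(A^k)_{ij}$ to $0$ and the whole argument would collapse; this is where one must invoke the "no other simplices except those necessary by the closure axiom" clause in the definition of $S_l^k$ to rule out any $(k+1)$-simplex. Everything else is a direct unwinding of definitions, and I would keep the write-up to essentially the three observations above (structure of $S_l^k$, pairwise adjacency via the shared central face together with absence of $(k+1)$-simplices, and the resulting farness computation).
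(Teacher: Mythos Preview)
Your proposal is correct and follows essentially the same route as the paper: both arguments verify that every pair of $k$-simplices in $S_l^k$ shares the central $(k-1)$-face (hence lower adjacent) while no $(k+1)$-simplex exists (hence not upper adjacent), so all pairwise distances equal $1$ and the normalized closeness is $1$. The paper's proof is more terse because the adjacency claim was already spelled out in the discussion of the lower bound for $l_k$ just before Lemma~9; you have simply reproduced that verification inline, which is fine.
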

\begin{proof}
In a simplicial complex of the form $S_{l}^{k}$ we have $l=\|R_{k}\|$
$k$-simplices which are all adjacent to each other. Thus if we select
a particular $k$-simplex $s_{i}\in R_{k}$ we have that $d(s_{i},s_{j})=1$
for all $s_{j}\in R_{k}$ such that $s_{i}\neq s_{j}$. This gives
$\sum_{s_{i}\neq s_{j}}d(s_{i},s_{j})=\|R_{k}\|-1$. Hence $C(s_{i})=\frac{\|R_{k}\|-1}{\|R_{k}\|-1}=1$.
\end{proof}
We now prove a lower bound for the normalized simplicial closeness
centrality
\begin{lem}
Let us consider a $s^{k}$-connected simplicial complex with $\|R\|$
$k$-simplices. Then, the lower bound for the normalized simplicial
closeness centrality of a $k$-simplex is $0$, and this is attained
asymptotically when $\|R\|\rightarrow\infty$.
\end{lem}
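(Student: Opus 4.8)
The plan is to exhibit an explicit family of $s^k$-connected simplicial complexes, indexed by $\|R_k\|$, whose normalized simplicial closeness centrality at a fixed simplex tends to $0$ as $\|R_k\|\to\infty$. The natural candidate is the path complex $P_l^k$ introduced in the Preliminaries, since this is exactly the family that was flagged as achieving the upper bound $\frac{\|R_k\|+1}{3}$ on the average path length in the preceding lemma — the same "maximally strung out" structure should minimize closeness. So first I would fix $k$ and consider $P_l^k$ with $l=\|R_k\|$ $k$-simplices arranged in a line $s_1 \backsim s_2 \backsim \cdots \backsim s_l$, and single out an endpoint simplex, say $s_1$.

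Next I would compute the farness of $s_1$ directly. Because the path complex is a "line" at the $k$-level, the $s^k$-shortest path distance satisfies $d(s_1,s_j)=j-1$ for each $j$ (this needs the observation that consecutive $k$-simplices in $P_l^k$ are adjacent in the sense of Definition~5 — lower adjacent through their shared face and, since $P_l^k$ contains no $(k+1)$-simplices, not upper adjacent — and that there is no shortcut, which follows from the linear arrangement). Hence
\[
\sum_{Y\neq s_1} d(Y,s_1) \;=\; \sum_{j=2}^{\|R_k\|} (j-1) \;=\; \frac{(\|R_k\|-1)\|R_k\|}{2}.
\]
The normalized closeness is then $(\|R_k\|-1)\,C(s_1) = \dfrac{\|R_k\|-1}{\frac{(\|R_k\|-1)\|R_k\|}{2}} = \dfrac{2}{\|R_k\|}$, which visibly tends to $0$ as $\|R_k\|\to\infty$. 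I would also remark that normalized closeness is nonnegative by construction (it is a positive quantity times a positive scaling factor), so $0$ is genuinely a lower bound that is not attained by any finite complex but is approached asymptotically — consistent with the phrasing of the statement.

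The main obstacle, such as it is, is purely a bookkeeping/justification point rather than a deep one: I must be careful to verify that the distances in $P_l^k$ really are $d(s_1,s_j)=j-1$ under the paper's specific adjacency convention (Definition~5 and Definition~8), i.e. that the $k$-simplices of the path complex form a genuine path in the $s^k$-adjacency graph with no unexpected extra adjacencies coming from shared faces other than the designed ones, and no upper adjacencies. For $k\ge 1$ this is immediate from the description of $P_l^k$ ("no other simplices except those necessary by the closure axiom", so no $(k+1)$-simplices), and for $k=0$ the path complex $P_l^0$ — better, $P_l^1$ viewed at the node level, or simply a path graph on $\|R_0\|$ vertices — reduces to the classical fact that an endpoint of a path graph has farness $\binom{\|R_0\|}{2}\!/\!$-type growth, giving normalized closeness $\frac{2}{\|R_0\|}\to 0$. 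Once the distance formula is pinned down, the rest is the one-line computation above, so I would keep the proof short.
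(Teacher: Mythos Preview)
Your proposal is correct and lands on exactly the same extremal configuration (the endpoint of a path complex $P_l^k$) and the same arithmetic ($\sum_{j=2}^{\|R\|}(j-1)=\tfrac{(\|R\|-1)\|R\|}{2}$, normalized closeness $\tfrac{2}{\|R\|}\to 0$) as the paper. The one difference in framing is that the paper argues \emph{extremally}: it shows that for \emph{any} $s^k$-connected complex on $\|R\|$ $k$-simplices the farness of a simplex cannot exceed $\tfrac{(\|R\|-1)\|R\|}{2}$ (because a distance of $r$ forces the existence of simplices at all distances $1,\dots,r-1$), hence $\tfrac{2}{\|R\|}$ is the sharp minimum of normalized closeness for each fixed $\|R\|$, and only afterwards observes that $P_l^k$ realizes it. You instead argue by \emph{example plus nonnegativity}: exhibit $P_l^k$, compute directly, and invoke $C(F)>0$ to conclude the infimum is $0$. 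For the lemma as stated your route is perfectly adequate and a bit more economical; the paper's route buys the extra information that $\tfrac{2}{\|R\|}$ is genuinely the minimum over all complexes of that size, not just a value attained by one family.
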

\begin{proof}
Assume the simplicial complex is $s^{k}$-connected. We are trying
to minimize $\frac{1}{\sum_{Y\neq F}d(Y,F)}$ and hence trying to
maximize $\sum_{Y\neq F}d(Y,F)$. Take a $k$-simplex $F$ in a simplicial
complex $X$ which has $\|R\|$ $k$-simplices. Firstly, because $X$
is $s^{k}$-connected there exists a $s^{k}$-walk between $F$ and
every other $k$-simplex in the simplicial complex. The farthest distance
possible between $F$ and another simplex $s_{\|R\|}$ is $\|R\|-1$.
This means that the $s^{k}$-shortest path between these two simplices
looks like $F,e_{1},$ $s_{2},e_{2},$ $s_{3},\dots,$ $s_{\|R\|-1},$
$e_{\|R\|-1},s_{\|R\|}$. The shortest path between $F$ and any other
$k$-simplex $t_{n}$ must be the path $F,e_{1},s_{2},e_{2},s_{3},\dots,e_{n-1},s_{n}$,
i.e. the shortest path from $F$ to $s_{\|R\|}$ but cut off at simplex
$s_{n}$. If there was a shorter path from $F$ to $s_{n}$ then you
could replace this part of the path from $F$ to $s_{\|R\|}$ with
said shorter path from $F$ to $s_{n}$ and have a shorter path from
$F$ to $s_{\|R\|}$. Hence, the $s^{k}$-shortest path distance from
$F$ to any other $k$-simplex, $s_{n}$ is $n-1$. Note that for
$F$ to be at distance $r$ from a $k$-simplex $s_{r}$, $F$ must
first be at distance $r-1$ from a simplex adjacent to $s_{r}$. Hence,
the maximum value for $\sum_{Y\neq F}d(Y,F)=1+2+3+\dots+(\|R\|-1)=\frac{(\|R\|-1)\|R\|}{2}$.
This gives a lower bound on $C(F)$ of $\frac{2}{(\|R\|-1)\|R\|}$
which after normalization by multiplication by $(\|R\|-1)$ gives
a lower bound on $C(F)$ of $\frac{2}{\|R\|}$. This clearly tends
to $0$ as $\|R\|\rightarrow\infty$.
\end{proof}
The bound of $\frac{2}{\|R\|}$ for a given number of $k$-simplices
is achieved by the end simplex in a $t^{k}$-path.

If a simplicial complex is not $s^{k}$-connected then $C(F)=0$ or
it is considered undefined for all $k$-simplices $F\in R$. Thus,
the peripheral $2$-simplex which is only adjacent to the central
$2$-simplex in the complex $t^{2}(1,2,4)$ has simplicial closeness
given by $\frac{7}{13}$. We have $\sum_{Y\neq F}d(Y,F)=1+2+2+2+2+2+2=13$
where $Y$ is the given simplex and $F$ is a run through of the other
simplices. The $1$ is contributed by the shortest path from $Y$
to the central simplex while the $2$s are given by the shortest path
distances from $Y$ to the other peripheral simplices on the other
branches. While $\|R\|-1=7$ for the normalization.

To give an example from the simplicial complex in Figure \ref{Simplicla complex1}
we need to use the definition of simplicial closeness given in Definition
14. So to calculate the Simplicial closeness of $\{2,3,4\}$ we have
$H(\{2,3,4\})=\frac{1}{1}+\frac{1}{2}+\frac{1}{2}+\frac{1}{\infty}+\frac{1}{\infty}+\frac{1}{\infty}=2$.
This is because it is adjacent to $\{3,4,5\}$ and has shortest path
distance $2$ to both $\{1,3,4\}$ and $\{4,5,6\}$. There is no $s$-path
from $\{2,3,4\}$ to any of the other simplices.

The second centrality notion which is based on shortest paths that
we can generalize is the betweenness centrality. Betweenness centrality
was introduced by Freeman in 1977 in order to capture the notion of
how central a node in a network is in passing information through
other nodes. The following is a direct generalization of this definition
\citep{freeman1977set}. The simplicial betweenness of a $k$-simplex
$F$ is defined as follows

\begin{equation}
g(F)=\sum_{S\neq F\neq T}\frac{\sigma_{ST}(F)}{\sigma_{ST}}
\end{equation}
where $\sigma_{ST}$ is the total number of shortest paths from $S$
to $T$ and $\sigma_{ST}(F)$ is the number of such paths that pass
through $F$, where $F,S,T\in R_{k}$.

The betweenness centrality of a $k$-simplex increases as the number
of pairs of other simplices increases. It is therefore sensible to
divide $g(F)$ by $\frac{(\|R\|-1)(\|R\|-2)}{2}$, the number of pairs
of $k$-simplices which are not the simplex $F$. This gives a value
for simplicial betweenness in the range $[0,1]$. The lower bound
of $0$ is attained by every $k$-simplex in a simplicial complex
of the form $S_{l}^{k}$. It is also attained by any simplex which
is adjacent to only one other simplex. The upper bound of $1$ can
be attained by the central $k$-simplex of a $t^{k}(x_{1},x_{2},\dots,x_{k+1})$
simplicial complex where $x_{i}\in\{0,1\}\forall i\in\{1,2,\dots,k+1\}$.

\subsection{Spectral simplicial centralities}

We now move to the concepts of centrality based on spectral properties
of the simplicial complexes. Historically, for networks the first
of these centralities was developed by Katz \citep{katz1953new}.
The Katz centrality index tries to capture the notion that a node
in a network is not only influenced by its nearest neighbors but in
a lower extension by any other node separated at a given distance
from it, in a way in which such influence decays with the separation
between the nodes. In this section we generalize these ideas to simplicial
complexes largely following the example of \citep{estrada2015first}.

To make this task easier we define an underlying network of simplices
at every level of a simplicial complex. For all $k$ the adjacency
matrix of the $k$-simplices of a simplicial complex also gives rise
to a network where each node corresponds to a simplicial complex and
there is an edge between two nodes if and only if their corresponding
$k$-simplices are adjacent in the simplicial complex. We call this
network the underlying network of simplices. This immediately gives
us some results from network theory.
\begin{lem}
Let $A_{k}$ be the adjacency matrix between $k$-simplices in a simplicial
complex. Then, $(A_{k})_{ij}^{m}$ gives the number of $s^{k}$-walks
of length $m$ between $k$-simplex, $i$ and $k$-simplex, $j$.
\end{lem}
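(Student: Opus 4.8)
The plan is to reduce this to the classical fact that, for a graph with adjacency matrix $B$, the entry $(B^m)_{ij}$ counts walks of length $m$ from $i$ to $j$, applied to the underlying network of $k$-simplices introduced above. The only point requiring care is that an $s^k$-walk $s_1,e_1,s_2,e_2,\dots,e_{r-1},s_r$ carries the extra data of the intermediate $(k-1)$-simplices $e_i$, whereas a walk in the underlying network is just a sequence of $k$-simplices; so I first need to check that this extra data is uniquely determined by the $s_i$, so that the two notions are in bijection rather than merely in a many-to-one correspondence.

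First I would establish the following: if two distinct $k$-simplices $\sigma$ and $\sigma'$ are lower adjacent, then the common $(k-1)$-face witnessing this is unique. Indeed, any common $(k-1)$-face is a $k$-element subset of both $\sigma$ and $\sigma'$, hence of $\sigma\cap\sigma'$; since $\sigma\neq\sigma'$ and $\|\sigma\|=\|\sigma'\|=k+1$, we have $\|\sigma\cap\sigma'\|\leq k$, so $\sigma\cap\sigma'$ is itself that $(k-1)$-face and there is no other. Consequently a sequence $s_1,s_2,\dots,s_r$ of $k$-simplices extends to an $s^k$-walk, by inserting faces $e_i$ between $s_i$ and $s_{i+1}$, in at most one way, and it does so exactly when for every $i$ the simplices $s_i$ and $s_{i+1}$ are lower adjacent and not both faces of a common $(k+1)$-simplex --- i.e.\ exactly when $s_i\backsim s_{i+1}$, which is precisely the condition $(A^k)_{s_i s_{i+1}}=1$. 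Thus $s^k$-walks of length $m$ from $i$ to $j$ are in bijection with walks of length $m$ from $i$ to $j$ in the underlying network of $k$-simplices. For $k=0$ there are no intermediate faces and an $s^0$-walk is by definition an ordinary graph walk with respect to $A^0=A_u^0$, so the reduction is immediate.

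It then remains to prove the classical statement, which I would do by induction on $m$. For $m=1$, $(A^k)_{ij}$ equals $1$ when $i\backsim j$ and $0$ otherwise, matching the (unique, by the uniqueness claim) $s^k$-walk of length $1$ or its absence. For the step, expand $(A^k)^{m+1}_{ij}=\sum_{l}(A^k)^m_{il}\,(A^k)_{lj}$; by the inductive hypothesis $(A^k)^m_{il}$ counts $s^k$-walks of length $m$ from $i$ to $l$, and multiplying by $(A^k)_{lj}\in\{0,1\}$ retains exactly those that can be continued by one further step to $j$ (the connecting face being forced), so the $l$-th term counts the $s^k$-walks of length $m+1$ from $i$ to $j$ whose penultimate $k$-simplex is $l$; summing over $l$ gives all of them.

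The only genuine obstacle is the uniqueness-of-the-connecting-face observation in the middle paragraph: without it one could not be sure whether walks in the underlying network over- or under-count $s^k$-walks. Once that is settled, the argument is just the standard powers-of-the-adjacency-matrix induction carried over verbatim to the underlying network.
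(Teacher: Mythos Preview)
Your proof is correct and follows the same approach as the paper: reduce to walks in the underlying network of $k$-simplices and invoke the standard powers-of-adjacency-matrix fact. You are in fact more careful than the paper, which simply asserts the correspondence between $s^{k}$-walks and network walks without checking that the intermediate $(k-1)$-faces $e_i$ are uniquely determined by the $s_i$; your observation that necessarily $e_i=s_i\cap s_{i+1}$ is exactly what upgrades that correspondence from a surjection to a bijection and justifies the count.
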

\begin{proof}
Every walk on the underlying network of simplices for a given simplex
of size $k$, has a corresponding $s^{k}$-walk over the $k$-simplices.
We have that $A_{k}$ is also the adjacency matrix for the nodes in
the underlying network of simplices. Thus, powers of the adjacency
matrix can be used to give the numbers of walks of a given length
on the underlying network of simplices. In particular, $(A_{k})_{ij}^{m}=b$
means that there are $b$ walks of length $m$ between node $i$ and
node $j$ in the underlying network of simplices at the $k$-simplex
level. This precisely corresponds to $s^{k}$-walks of length $m$
between simplex $i$ and simplex $j$. Simplex $i$ and simplex $j$
are the simplices represented by node $i$ and node $j$ respectively
in the underlying network of simplices.
\end{proof}
Let $A_{k}$ be the adjacency matrix representing the adjacency between
$k$-simplices in a simplicial complex. The simplicial Katz centrality
index is given by

\begin{equation}
K_{k,i}=[(\alpha^{0}A_{k}^{0}+\alpha A_{k}+\alpha^{2}A_{k}^{2}+\dots+\alpha^{m}A_{k}^{m}+\dots)\mathbf{(}e)]_{i}=\big[\sum_{m=0}^{\infty}(\alpha^{m}A_{k}^{m})\mathbf{e}\big]_{i},
\end{equation}
where $0<\alpha<\frac{1}{\lambda_{1}(A_{k})}$. The simplicial Katz
centrality is essentially the network-theoretic Katz centrality applied
to the underlying network of simplices. This means that as proved
in \citep{estrada2015first} the series, $(\alpha^{0}A_{k}^{0}+\alpha A_{k}+\alpha^{2}A_{k}^{2}+\dots+\alpha^{m}A_{k}^{m}+\dots)$
converges when $\alpha\le\rho(A_{k})$, where $\rho(A_{k})$ is the
spectral radius of $A_{k}$. This means that $K_{i}=[(I-\alpha A_{k})^{-1}\mathbf{e}]_{i}$.
We also have from \citep{estrada2015first} the representation of
the Katz centrality in terms the eigenvalues and eigenvectors of $A_{k}$.
This representation gives $K_{i}=\sum_{l}\sum_{j}\psi_{k,j}(i)\psi_{k,j}(l)\frac{1}{1-\alpha\lambda_{k,j}}$.
Where $\psi_{k,j}(i)$ and $\psi_{k,j}(l)$ are the $i$th and $l$th
entries of the $j$th eigenvector of $A_{k}$, respectively and $\lambda_{k,j}$
is the $j$th eigenvalue of $A_{k}$.

We can now use the simplicial Katz centrality to define the simplicial
eigenvector centrality. The following adjustment of the Katz centrality
appears in \citep{estrada2015first} and can also be applied to the
simplicial Katz centrality.

\begin{equation}
\begin{split} & \vec{v}_{k}=\big(\sum_{m=1}^{\infty}\alpha^{m-1}A_{k}^{m}\big)\mathbf{e}=\big(\sum_{m=1}^{\infty}\alpha^{m-1}\sum_{j=1}^{n}\vec{\psi}_{k,j}\vec{\psi}_{k,j}^{T}\lambda_{k,j}^{m}\big)\mathbf{e}\\
 & =\big(\frac{1}{\alpha}\sum_{j=1}^{n}\sum_{m=1}^{\infty}(\alpha\lambda_{k,j})^{m}\vec{\psi}_{k,j}\vec{\psi}_{k,j}^{T}\big)\mathbf{e}=\big(\frac{1}{\alpha}\sum_{j=1}^{n}\frac{1}{1-\alpha\lambda_{k,j}}\vec{\psi}_{k,j}\vec{\psi}_{k,j}^{T}\big)\mathbf{e}.
\end{split}
\end{equation}

Again following the example of \citep{estrada2015first} allows $\alpha$
to approach the inverse of the largest eigenvalue of $A_{k}$ from
below $(\alpha\to\frac{1}{\lambda_{1}}^{-})$. This gives

\begin{equation}
\begin{split} & \lim_{\alpha\to\frac{1}{\lambda_{k,1}}^{-}}(1-\alpha\lambda_{k,1})\vec{v}_{k}=\lim_{\alpha\to\frac{1}{\lambda_{k,1}}^{-}}\big(\frac{1}{\alpha}\sum_{j=1}^{n}\frac{(1-\alpha\lambda_{k,1})\mathbf{(}v)}{1-\alpha\lambda_{k,j}}\vec{\psi}_{k,j}\vec{\psi}_{k,j}^{T}\big)\mathbf{e}\\
 & =\big(\lambda_{k,1}\sum_{i=1}^{n}\psi_{k,1}(i)\big)\vec{\psi}_{k,1}=\gamma\vec{\psi}_{1,j}
\end{split}
\end{equation}

Therefore the eigenvector associated with the largest eigenvalue of
$A_{k}$ could also be said to be a centrality measure. This leads
us to the following definition. The simplicial eigenvector centrality
of the $i$th $k$-simplex in a simplicial complex is given by the
$i$th component of the principal eigenvector of $A_{k}$, $\psi_{k,1}(i).$

In a similar way as in the previous section we make a generalization
of the exponential of the adjacency matrix of $k$-simplices which
relies on results from the paper \citep{estrada2005subgraph} The
following power series of the adjacency matrix of $k$-simplices $A_{k}$
in a simplicial complex converges to the corresponding matrix exponential
\begin{center}
\begin{equation}
\sum_{l=0}^{\infty}\dfrac{A_{k}^{l}}{l!}=\exp\left(A_{k}\right).
\end{equation}
\par\end{center}

Obviously,
\begin{center}
\begin{equation}
\exp\left(A_{k}\right)=\sum_{l=0}^{\infty}\dfrac{A_{k}^{2l}}{\left(2l\right)!}+\sum_{l=0}^{\infty}\dfrac{A_{k}^{2l+1}}{\left(2l+1\right)!}=\cosh\left(A_{k}\right)+\sinh\left(A_{k}\right),
\end{equation}
\par\end{center}

where the first term accounts for the weighted sum of even-length
walks and the second one accounts for odd-length walks in the simplicial
complex.

We can now define a centrality measure analogous to subgraph centrality
for simplicial complexes. Subgraph centrality was introduced for networks
by Estrada and Rodríguez-Velázquez \citep{estrada2005subgraph} to
capture the participation of a node in a network in all subgraphs
in the network, giving more weight to the smaller than to the larger
ones. This is a direct generalization made possible by the adjacency
matrices at the different levels of the simplicial complex. Then,
the simplicial subgraph centrality of a $k$-simplex, $i$, is given
by $(\exp^{A_{k}})_{ii}$. For the simplicial complex in Figure \ref{Simplicla complex1}
we have that the simplicial subgraph centrality of the $1$-simplex
$\{1,4\}$ is 2.714 while the simplicial communicability between $\{1,4\}$
and $\{6,9\}$ is 2.0363. Note that any bounds on subgraph centrality
or simplicial communicability for networks still hold due to the underlying
network of simplices.
\begin{lem}
Let $k\geq1$ and let us consider an $s^{k}$-connected simplicial
complexes which contain a fixed number $b$ of $k$-simplices. The
upper bound of the simplicial subgraph centrality is attained by every
simplex in a simplicial complex $S_{b}^{k}$ and the lower bound is
attained by the two end simplices in a simplicial complex of the form
$P_{b}^{k}$.
\end{lem}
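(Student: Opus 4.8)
The plan is to reduce everything to the underlying network of simplices and invoke the standard facts about subgraph centrality of graphs. Recall that the simplicial subgraph centrality of a $k$-simplex $i$ is $\big(\exp(A_k)\big)_{ii}=\sum_{l\ge 0}(A_k^l)_{ii}/l!$, which counts closed $s^k$-walks based at $i$, weighted by the inverse factorial of their length. Since the number $b=\|R_k\|$ of $k$-simplices is fixed, $A_k$ is a $b\times b$ symmetric $0/1$ matrix with zero diagonal, i.e.\ the adjacency matrix of a simple graph on $b$ vertices, and the question becomes: among all such graphs arising as the $k$-th adjacency structure of an $s^k$-connected simplicial complex, which one maximizes (resp.\ minimizes) a diagonal entry of $\exp(A_k)$? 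First I would observe that $S_b^k$ realizes the complete graph $K_b$ on the $k$-simplices (all $b$ simplices pairwise lower adjacent through the central $(k-1)$-simplex, none upper adjacent, as already noted in the paper), while $P_b^k$ realizes the path graph $P_b$ on $b$ vertices (consecutive $k$-simplices lower adjacent, no upper adjacencies), again as noted earlier.

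For the upper bound, the key step is monotonicity of $(\exp A)_{ii}$ under edge addition: if $A\le A'$ entrywise (adding edges to the underlying network of simplices) then $(A^l)_{ii}\le (A'^l)_{ii}$ for every $l$, since each such power counts nonnegative numbers of closed walks and adding edges can only create more of them; summing against $1/l!$ gives $(\exp A)_{ii}\le(\exp A')_{ii}$. Hence the diagonal entry is maximized when the underlying network is as dense as possible, namely $K_b$, which corresponds exactly to $S_b^k$; and by vertex-transitivity of $K_b$ every vertex attains the common maximal value, so the bound is attained by \emph{every} simplex of $S_b^k$. (One should remark that an $s^k$-connected complex on $b$ simplices whose underlying network is $K_b$ is, up to the equivalence at the $k$-level, exactly $S_b^k$, so this is the extremal configuration.)

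For the lower bound one wants the \emph{sparsest} $s^k$-connected underlying network on $b$ vertices, which is a tree; among trees the path $P_b$ minimizes the number of closed walks of each length from a leaf — intuitively, endpoints of the path have the fewest and longest ``return routes.'' The clean way to make this precise is again a walk-counting comparison: for any connected graph $G$ on $b$ vertices and any vertex $i$, the number of closed walks of length $l$ at $i$ is at least the number of closed walks of length $l$ at an endpoint of $P_b$; one can see this because the endpoint of the path is a ``pendant vertex attached to the longest possible thin thread,'' giving the componentwise-smallest walk-generating function among connected graphs on $b$ vertices (this is the point where one invokes, as the paper permits, the known extremal results for the Estrada index / subgraph centrality of trees, where $P_b$ is the minimizer and leaves are its least-central vertices). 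Since both endpoints of $P_b^k$ are equivalent, both attain this minimum, which is the assertion.

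The main obstacle is the lower-bound direction: unlike edge-addition monotonicity, ``$P_b$ minimizes diagonal entries of $\exp(A)$ among connected graphs, with the minimum at a leaf'' is not a one-line entrywise inequality and genuinely needs the structure theory of subgraph centralities of trees (or an explicit eigenvalue/walk comparison). The upper bound, by contrast, is essentially immediate from monotonicity and vertex-transitivity of $K_b$. I would also need the small sanity remark that the configurations $S_b^k$ and $P_b^k$ are in fact $s^k$-connected with the claimed underlying networks, which the paper has already established.
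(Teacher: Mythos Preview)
Your proposal is correct and follows essentially the same approach as the paper: reduce the simplicial subgraph centrality to the ordinary subgraph centrality of the underlying network of simplices, invoke the known graph-theoretic extremal results (maximum at every vertex of $K_b$, minimum at the endpoints of the path $P_b$), and observe that $S_b^k$ and $P_b^k$ realize these two graphs. The paper simply cites the network results from \citep{estrada2005subgraph} for both bounds, whereas you supply an explicit monotonicity-under-edge-addition argument for the upper bound and are candid that the lower bound genuinely requires the nontrivial extremal theory for trees; your parenthetical claim that $S_b^k$ is the \emph{unique} complex with underlying network $K_b$ is a slight overreach, but it is not needed for the lemma and does no harm.
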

\begin{proof}
Fix a number of $k$-simplices to $b$. It is known that for $b$
nodes the upper bound of subgraph centrality in networks is attained
by every node in the complete graph $K_{b}$ \citep{estrada2005subgraph}.
The subgraph centrality for the underlying network of simplices at
the $k$-simplex level is the same as the simplicial subgraph centrality
for $k$-simplices. Thus, to find the upper bound of the simplicial
subgraph centrality we need to find a simplicial complex whose underlying
network of simplices is a complete graph. The simplicial complex $S_{b}^{k}$
satisfies this criterion. Similarly, the lower bound of subgraph centrality
in networks is attained by the two end simplices in a path graph of
length $b$. Thus to find the lower bound of the simplicial subgraph
centrality you need to find a simplicial complex whose underlying
network of simplices is a path graph. The simplicial complex $P_{b}^{k}$
satisfies this criterion.
\end{proof}

\section{Analysis of Protein Interaction Networks}

Here we study 10 protein-protein interaction (PPI) networks. In these
networks nodes represent proteins and undirected links represent the
interaction between two proteins determined experimentally. The networks
studied correspond to the following organisms: \textit{D. melanogaster}
(fruit fly) \citet{giot2003protein}, Kaposi sarcoma herpes virus
(KSHV) \citet{uetz2006herpesviral}, \textit{P. falsiparum} (malaria
parasite) \citet{lacount2005protein}, varicella zoster virus (VZV)
\citet{uetz2006herpesviral}, human \citet{rual2005towards}, \textit{S.
cereviciae} (yeast) \citet{bu2003topological}, \textit{A. fulgidus}
\citet{motz2002elucidation}, \textit{H. pylori} (\citet{lin2004hp,rain2001protein}),
\textit{E. coli \citet{butland2005interaction}} and \textit{B.subtilus}
\citet{noirot2004protein}. We study only the largest (main) connected
component of each of these networks, which range from 50 to 3039 proteins.
We then transformed these networks into their clique simplicial complexes
consisting of edges and of triangles, respectively. The number of
simplices and interactions at the nodes, edges and triangle level
are given in Table 1. Notice that the number of simplices at the edges
level is the same as the number of interactions at the nodes level.

\begin{table}
\begin{centering}
\begin{tabular}{>{\centering}p{3cm}>{\centering}m{1.6cm}>{\centering}m{1.6cm}>{\centering}m{1.6cm}>{\centering}m{1.6cm}>{\centering}m{1.6cm}}
\hline
 & \multicolumn{2}{c}{nodes} & edges & \multicolumn{2}{c}{triangles}\tabularnewline
\hline
species & simplices & interact. & interact. & simplices & interact.\tabularnewline
\hline
\textit{A. fulgidus} & 32 & 37 & 101 & 1 & 0\tabularnewline
\hline
KSHV & 50 & 114 & 606 & 34 & 82\tabularnewline
\hline
VZV & 53 & 148 & 1156 & 104 & 343\tabularnewline
\hline
\textit{B.subtilus} & 84 & 98 & 463 & 4 & 1\tabularnewline
\hline
\textit{P. falsiparum} & 229 & 604 & 4599 & 201 & 401\tabularnewline
\hline
\textit{E. coli} & 230 & 695 & 7803 & 478 & 2425\tabularnewline
\hline
\textit{H. pylory} & 710 & 1396 & 14736 & 76 & 79\tabularnewline
\hline
\textit{S. cereviciae} & 2224 & 6609 & 99882 & 3530 & 15004\tabularnewline
\hline
human & 2783 & 6007 & 85617 & 1047 & 2170\tabularnewline
\hline
\textit{D. melanogaster} & 3039 & 3687 & 11369 & 163 & 113\tabularnewline
\hline
\end{tabular}\caption{Number of simplices and their interactions at the nodes, edges and
triangles levels for the 10 PPI networks studied. Notice that the
number of simplices at the edge level is the same as the number of
interactions at the node level.}
\par\end{centering}
\label{data simplices}
\end{table}

\subsection{Degree distributions}

The study of node degree distribution has become one of the standard
tests considered for the structural analysis of networks. A network
with a broad degree distribution\textendash also know as fat-tailed
distribution\textendash is characterized by the presence of a few
hubs\textendash high degree node\textendash which keep the network
together. These hubs are important from the structural and functional
point of view in these networks. In the case of PPI networks hubs
are expected to play fundamental role in the cell and their knockout
is expected to produce a large cellular damage. This is the main hypothesis
of the centrality-lethality paradigm. A particular kind of fat-tailed
distribution, the power-law one, received a large deal of attention
in the literature. A power-law degree distribution is also know as
a scale-free distribution and it is indicative of some self-similarities
properties in the network. At the beginning of the XXI century a deluge
of papers finding scale-free distributions in almost every network
were published. Many of the existing PPI networks were characterized
as scale-free ones based on these findings. Later, more order has
being in place and some authors have found that almost none of the
PPI networks previously claimed to have scale-free structures were
so \citep{stumpf2005probability}. The main message of these experiences
is that most of PPI networks indeed display some kind of heavy-tailed
degree distributions, such as power-law, lognormal, Burr, logGamma,
Pareto, etc. However, as we will see here this is not necessarily
true when a large number of statistical distributions and goodness
of fit parameters are tested for the 10 PPI networks considered in
this work.

Here we consider the probability degree functions (PDF), $p\left(\delta_{k}\right)$
vs. $\delta_{k}$, for 10 PPI networks at the three different levels
studied in this work, i.e., nodes, edges, and triangles. For each
of the PDFs we fit the data to every of the following distributions:
Beta, Binomial, Birnbaum-Saunders, Burr, Exponential, Extreme Value,
Gamma, Generalized Extreme Value (GEV), Generalized Pareto (gen-Pareto),
Half-normal, Inverse Gaussian, Kernel, Logistic, Loglogistic, Lognormal,
Nakagami, Negative Binomial, Normal, Poisson, Rayleigh, Rician, Stable,
$t$ Location-Scale, and Weibull. The best fit was determined on the
basis of the following statistical parameters: Akaike information
criterion (AIC) \citep{konishi2008information,symonds2011brief} and
the Bayesian information criterion (BIC) \citep{konishi2008information}.
These indices are defined as follow:

\begin{equation}
AIC=2k-2\ln\left(\hat{L}\right),
\end{equation}

\begin{equation}
BIC=k\ln\left(n\right)-2\ln\left(\hat{L}\right),
\end{equation}
where $n$ is the number of data points, $k$ is the number of parameters
to be estimated and $\hat{L}=p\left(x|\hat{\theta},M\right)$ is the
maximized value of the likelihood function of the model $M$, where
$\hat{\theta}$ are the parameter values that maximize the likelihood
function and $x$ are the data points. For a series of models trying
to describe the same dataset, the smallest values of these three parameters
gives the best fit for the data. However, it is important to consider
the differences between the values of these parameters for the corresponding
models as we describe below.

We then first fit the dataset corresponding to the degrees of the
corresponding simplices in a PPI to all the studied distributions.
Then, we rank all the distributions in increasing order of their AIC.
We then compare the values of the first few distributions in the ranking
using $\varDelta AIC_{i}=\exp\left(\left(AIC_{min}-AIC_{i}\right)/2\right)$,
where $AIC_{min}$ is the AIC for the top distribution in the ranking.
If $\varDelta AIC_{i}<0.01$ we consider that the first distribution
in the ranking is significantly different from the second (and consequently
the rest) as to accept it as the most significant one. In those cases
where the differences in the AIC is not significant we also consider
the difference in the BIC values. In this case we apply the Kass-Raftery
criterion as follows:
\begin{center}
\begin{tabular}{|>{\centering}p{2cm}|>{\centering}p{3cm}|}
\hline
$\varDelta BIC_{i}$ & meaning\tabularnewline
\hline
\hline
0-2 & not significant\tabularnewline
\hline
2-6 & positive\tabularnewline
\hline
6-10 & strong\tabularnewline
\hline
\textgreater{}10 & very strong\tabularnewline
\hline
\end{tabular}
\par\end{center}

This means, for instance, that if the difference in the values of
BIC is not bigger than 2, this criterion is not able to distinguish
between the two distributions. If, however, it is between 6-10 there
is a strong criterion to consider the distribution with the smallest
BIC as the most significant one \citep{kass1995bayes}. In Table 2
we show the best distribution fitted for each of the datasets studied.

\begin{table}
\begin{centering}
\begin{tabular}{>{\centering}p{3cm}>{\centering}p{3.5cm}>{\centering}p{3.5cm}>{\centering}p{3.5cm}}
\hline
species & nodes & edges & triangles\tabularnewline
\hline
\textit{A. fulgidus} & NA & gen-Pareto & NA\tabularnewline
KSHV & NA & gen-Pareto & gen-Pareto{*}\tabularnewline
VZV & gen-Pareto{*} & gen-Pareto/gamma{*}{*} & NA\tabularnewline
\textit{B. subtilus} & gen-Pareto & NA & NA\tabularnewline
\textit{P. falsiparum} & NA & gamma & gen-Pareto\tabularnewline
\textit{E. coli} & gen-Pareto & GEV{*} & gen-Pareto\tabularnewline
\textit{H. pylory} & gen-Pareto & gamma & NA\tabularnewline
\textit{S. cereviciae} & gen-Pareto & GEV & gen-Pareto\tabularnewline
human & gen-Pareto & NA & NA\tabularnewline
\textit{D. melanogaster} & gen-Pareto & GEV & gen-Pareto\tabularnewline
\end{tabular}\caption{Degree distributions of the nodes, edges and triangles in the simplicial
complexes representing 10 PPI networks studied here (see text for
selection criteria). Not available (NA) distributions are reported
when the data was scarce for a statistically significant fit of the
distributions or the statistical criteria used were unable to decide
between two or more distributions. {*}BIC criterion indicates only
a strong differentiation with the second best distribution. {*}{*}BIC
criterion indicates only a positive differentiation with the second
best distribution (see Appendix). }
\par\end{centering}
\label{distributions}
\end{table}

The most interesting observation from the results shown in Table 2
is that all distributions obtained for the three levels of the simplicial
complexes of the 10 PPI networks studied are heavy-tailed distributions.
At the node level, the 7 distributions that were statistical significant\textendash for
the other three the statistical criteria used were not able to distinguish
between the first few distributions\textemdash correspond to a generalized
Pareto distribution, where the probability of finding nodes of a given
degree decays as a power-law of the corresponding degree (see Appendix).
At the edge level, the PPI networks display GEV, generalized Pareto,
and gamma distributions, all of which are heavy-tailed (see Appendix).
Finally, at the triangles level 5 PPI networks display generalized
Pareto distributions and for the others it was not possible to determine
the best distribution. These results indicate that at the three levels
studied here, nodes, edges and triangles, there are simplicial-hubs
which concentrate most of the connectivity of the simplicial complexes
at the corresponding level. The damage of these hubs is expected to
produce catastrophic consequences for the functionality of the cell.
On the other side of the coin, the existence of heavy-tailed distributions
guarantees that the corresponding simplicial complexes are more robust
at these levels to the random failure of simplices. These results
also point out to the necessity of using other types of characterization
of the degree heterogeneity for simplicial complexes by considering
not-statistical indices, which can be applied even for small datasets
and/or datasets with small variability in their degrees \citep{estrada2010degreehet}.
This is an ongoing project in our group which will be considered in
a separate work.

\subsection{Comparison of centralities at different levels}

Simplicial centrality measures are all designed to identify the ``most
important'' simplices in a simplicial complex at different levels
and according to certain topological feature of the complex, such
as nearest-neighbor connectivity (degrees), proximity of other simplices
(closeness) and participation of a simplex in small sub-complexes
with other simplices (subgraph centralities). Then, it is expected
that there is some correlation between the centralities inside each
level of analysis. That is, it is expected that node degree is somehow
correlated to node closeness or node subgraph centrality for a given
PPI. For instance, in the yeast PPI the node centralities (degree,
closeness and subgraph centrality) have an average rank correlation
coefficient $\left\langle r_{n,n}\right\rangle \approx0.828$, with
the hugest rank correlation coefficient $r$ being between the closeness
and the subgraph centralities ($r\approx0.924$). At the edges level
this average rank correlation is of $\left\langle r_{e,e}\right\rangle \approx0.827$
and at the triangle level it raises up to $\left\langle r_{t,t}\right\rangle \approx0.970$.

In contrast with what we expect, and observe, at the individual levels
of the simplicial complex, is what we should expect on the relations
between two different levels of the simplicial complex. That is, we
do not have any theoretical insight indicating whether the information
provided by the centralities at the node level is or is not correlated
to that provided at the edges or triangles ones. In this case, however,
it should be desirable that not so high rank correlation is observed
as a way to increase the amount of different structural information
encoded by the simplicial centralities. This is indeed what is observed
for the PPI simplicial complex of yeast. The average rank correlation
coefficient between the nodes and edges centralities is just $\left\langle r_{n,e}\right\rangle \approx0.609$,
and that between nodes and triangles centralities is $\left\langle r_{n,t}\right\rangle \approx0.587$.
Finally, the average rank correlation between the edges and triangles
levels is barely $\left\langle r_{e,t}\right\rangle \approx0.228$.
These lack of correlations between the inter-level centralities (see
Table \ref{Correlation Table}) clearly indicate that the top nodes
in the ranking at one simplicial level does not necessarily coincide
with that produced by the centralities at a different level.

\begin{table}
\begin{centering}
\begin{tabular}{|c|c|c|c|c|c|c|c|c|}
\hline
 & \multicolumn{2}{c|}{nodes} & \multicolumn{3}{c|}{edges} & \multicolumn{3}{c|}{triangles}\tabularnewline
\hline
 & SC & CC & DC & SC & CC & DC & SC & CC\tabularnewline
\hline
\hline
Node Degree & 0.7602 & 0.7984 & 0.3292 & 0.3856 & 0.3020 & 0.6586 & 0.6951 & 0.6812\tabularnewline
\hline
Node Subgraph &  & 0.9245 & 0.7376 & 0.6981 & 0.7471 & 0.5562 & 0.5780 & 0.5990\tabularnewline
\hline
Node Closeness &  &  & 0.7347 & 0.7710 & 0.7805 & 0.4795 & 0.5103 & 0.5260\tabularnewline
\hline
Edge Degree &  &  &  & 0.7617 & 0.9025 & 0.2744 & 0.2794 & 0.2928\tabularnewline
\hline
Edge Subgraph &  &  &  &  & 0.8180 & 0.2042 & 0.2207 & 0.2456\tabularnewline
\hline
Edge Closeness &  &  &  &  &  & 0.1558 & 0.1691 & 0.2076\tabularnewline
\hline
Triangle Degree &  &  &  &  &  &  & 0.9725 & 0.9772\tabularnewline
\hline
Triangle Subgraph &  &  &  &  &  &  &  & 0.9589\tabularnewline
\hline
\end{tabular}
\par\end{centering}
\caption{Spearman's rank correlation coefficients between the rankings of three
centralities of the 0,1 and 2-simplices in the yeast PPI.}

\label{Correlation Table}
\end{table}

It is also important to consider that none of the correlations are
negative. This implies that none of the centralities fundamentally
disagree with each other. It is not the case that a centrality at
one level is telling us that one set of nodes is not important and
another set of nodes is, while a centrality at a different level is
telling us the exact opposite. It is more likely that a centrality
at one level is telling us that one set of nodes is important while
a centrality at a different level is telling us the same thing but
the order of importance is shuffled between the two centralities.
This hypothesis is backed up when we consider the triangle and node
degrees. Of the 100 most central nodes according to these centralities
24 coincide. When we consider the top 300 this rises to 111 proteins
(37\%) and looking at the top 500 the two centralities identify 268
(53.6\%). This may explain the difference between the centralities
capabilities in the detection of essential proteins in the next section
when a small percentage of the top proteins are considered compared
to the similarity when a larger percentage is considered.

We then study the average rank correlation coefficient for all the
PPI networks considered in this work. In Table (\ref{average correlation})
we give the average Spearman rank correlation coefficients for all
the PPI networks studied. As can be seen the inter-level correlations
between the centralities considered is relatively high following our
expectations of different centralities identifying essentially the
same groups of proteins at each corresponding level. The highest correlations
are observed for the triangle level, which is mainly due to the high
correlation between the triangle degree and closeness centralities.
This high correlation could be a consequence of the fact that most
of the high degree triangles are clumped together forming complexes
of many other triangles (see next section for the case of yeast).
Then, these high-degree triangles are close to each other, giving
also a high triangle closeness. Finally, we also observe poor rank
correlation between the different pairs of levels considered for the
10 PPI networks analyzed. The slightly negative average obtained for
$\left\langle r_{e,t}\right\rangle $ in \textit{B. subtilus }can
be considered more like a lack of correlation than as a negative correlation
between the indices because in no case the Spearman correlation coefficient
is larger, in modular terms, than 0.1.

\begin{table}
\begin{centering}
\begin{tabular}{|c|c|c|c|c|c|c|}
\hline
 & $\left\langle r_{n,n}\right\rangle $ & $\left\langle r_{e,e}\right\rangle $ & $\left\langle r_{t,t}\right\rangle $ & $\left\langle r_{n,e}\right\rangle $ & $\left\langle r_{n,t}\right\rangle $ & $\left\langle r_{e,t}\right\rangle $\tabularnewline
\hline
\hline
\textit{A. fulgidus} & 0.822 & 0.844 & NA & 0.399 & NA & NA\tabularnewline
\hline
KSHV & 0.912 & 0.778 & 0.993 & 0.632 & 0.666 & 0.493\tabularnewline
\hline
VZV & 0.873 & 0.783 & 0.899 & 0.176 & 0.751 & 0.186\tabularnewline
\hline
\textit{B. subtilus} & 0.749 & 0.865 & 0.792 & 0.407 & 0.294 & -0.030\tabularnewline
\hline
\textit{P. falsiparum} & 0.842 & 0.826 & 0.957 & 0.608 & 0.655 & 0.403\tabularnewline
\hline
\textit{E. coli} & 0.842 & 0.915 & 0.959 & 0.714 & 0.699 & 0.483\tabularnewline
\hline
\textit{H. pylory} & 0.847 & 0.740 & 0.932 & 0.608 & 0.458 & 0.254\tabularnewline
\hline
\textit{S. cereviciae} & 0.828 & 0.827 & 0.970 & 0.609 & 0.587 & 0.228\tabularnewline
\hline
human & 0.732 & 0.818 & 0.929 & 0.641 & 0.505 & 0.235\tabularnewline
\hline
\textit{D. melanogaster} & 0.661 & 0.703 & 0.795 & 0.568 & 0.303 & 0.188\tabularnewline
\hline
\end{tabular}
\par\end{centering}
\caption{Intra- ($\left\langle r_{n,n}\right\rangle $, $\left\langle r_{e,e}\right\rangle $
and $\left\langle r_{t,t}\right\rangle $) and inter-level ($\left\langle r_{n,e}\right\rangle $,
$\left\langle r_{n,t}\right\rangle $ and $\left\langle r_{e,t}\right\rangle $)
average Spearman's rank correlation coefficients between the rankings
of three centralities of the 0, 1 and 2-simplices in the 10 PPI networks
studied. See text for notation and explanations.}

\label{average correlation}
\end{table}

\subsection{Identification of essential proteins}

\subsubsection{Methodology}

An essential protein is the one that if knocked out the cell dies.
Then, the identification of such proteins has become one of the main
paradigms of the study of centrality measures in PPIs \citep{seringhaus2006predicting,yu2007importance,wang2012identification,gustafson2006towards}.
The reasons for such interest are twofold. On the one hand, it is
important to have theoretical tools that allow the identification
of proteins that can be drug targets, think for instance in the identification
of essential proteins in a pathogenic microorganism. On the other
hand, it is one of the scarce examples in which centrality measures
can be validated against some experimental data. The methodology for
essential protein identification that we consider here is adapted
from the one developed by Estrada in 2006, and consists of the following
steps \citep{estrada2006virtual}. First, we transform the PPI network
into a clique simplicial complex to consider node, edge and triangle
centralities. Then, we calculate the corresponding centralities at
the three levels for each of the simplicies. We transform edge and
triangle centrality into information based on the nodes forming such
fragments by calculating the average centrality of all the edges and
triangles in which a given node is involved in, respectively. Using
these centralities we rank all the proteins in the PPI in decreasing
order of their centralities. We then count how many essential proteins
are in the top $x$\% of the ranking and report this number as the
percentage of essential proteins identified by the corresponding centrality
(see Fig. \ref{Protein identification}). An ideal index for essential
protein identification will be the one which rank all essential proteins
at the top of the ranking, such as if we want to select 100 essential
proteins we simply pick the top 100 proteins in that ranking. We always
compare the results of this process with the random selection of proteins.
That is, we rank randomly the proteins in the PPI and count the number
of essential ones in the top $x$\% of the ranking.

\begin{figure}
\begin{centering}
\includegraphics[width=0.75\textwidth]{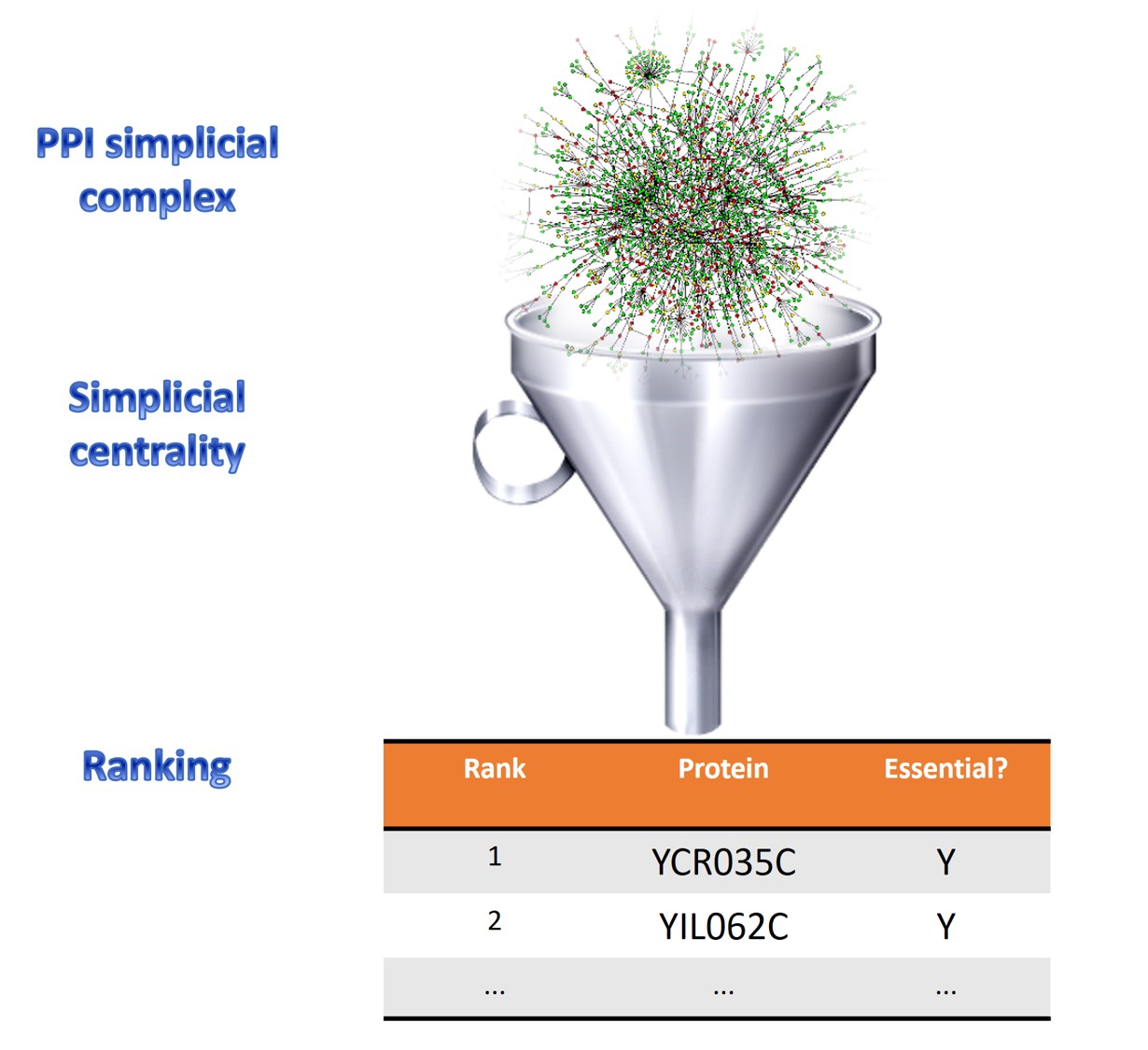}
\par\end{centering}
\caption{Schematic representation of the process of identification of essential
proteins using simplicial centralities in a PPI.}

\label{Protein identification}
\end{figure}

\subsubsection{Application to yeast PPI}

We now apply the methodology previously described to identify essential
proteins in the yeast PPI. There are several datasets of the interactions
of proteins in yeast. Here we use the data compiled by Bu et al. \citep{bu2003topological}.
The original data was obtained by von Mering et al. \citep{von2002comparative}
by assessing a total of 80,000 interactions among 5400 proteins reported
previously and assigning each interaction a confidence level. Bu et
al. \citep{bu2003topological} focused on 11,855 interactions between
2617 proteins with high and medium confidence in order to reduce the
interference of false positives, from which they reported a network
consisting on 2361 nodes and 6646 links (http://vlado.fmf.uni-lj.si/pub/networks/data/bio/Yeast/Yeast.htm).
This interaction map is considered here as a network in which proteins
are represented as the nodes and two nodes are linked by an edge if
the corresponding two proteins can be expected with high or medium
confidence of interacting. In this section we consider the node, edge
and triangle degree, closeness, and subgraph centralities as examples
of nearest-neighbor, shortest-path and spectral centralities.

The first interesting observation obtained from this analysis is that
the centralities based on the edge level of the simplicial complex
perform very badly in identifying the essential proteins. For instance,
for the top 1\% of proteins in the ranking, the node and triangle
centralities identify more than 45\% of essential protein (see detailed
analysis below), but the edge centralities do not identify more than
10\% of them (edge degree identifies 27\%). In general, neither of
the edge centrality is able to identify more than 35\% of essential
proteins at any percentage of top proteins selected. This result contrast
very much with the ones obtained by using node and triangle centralities.
For instance, for the closeness centrality at both node and triangle
level, the number of essential proteins identified is always larger
than 37\%. As can be seen in Fig. (\ref{centrality_essentiality}(a))
the triangle closeness centrality significantly outperforms the node
one for all the percentages of proteins considered. Triangle closeness
can identify up to 10\% more essential proteins than the same centrality
based on nodes, e.g., for 10\% and 15\% of top proteins. These differences
represent up to 40 essential proteins more identified by the triangle
centrality than the ones identified by the node one.

The largest percentages of essential proteins identified are obtained
by means of the subgraph centralities. In particular, for 1\% and
3\%, the triangle subgraph centrality outperforms the node one in
significant proportions. For 1\% of top proteins the triangle subgraph
centrality identifies 20\% more of essential proteins than its node
analogous, and for the 3\% it outperforms the node centrality in 14\%.
However, for top percentages of rankings over 5\%, the node and triangle
subgraph centrality do not show very significant differences in the
identification of essential proteins and they both identify around
50\% of the essential proteins existing.

\begin{figure}
\subfloat[]{\begin{centering}
\includegraphics[width=0.45\textwidth]{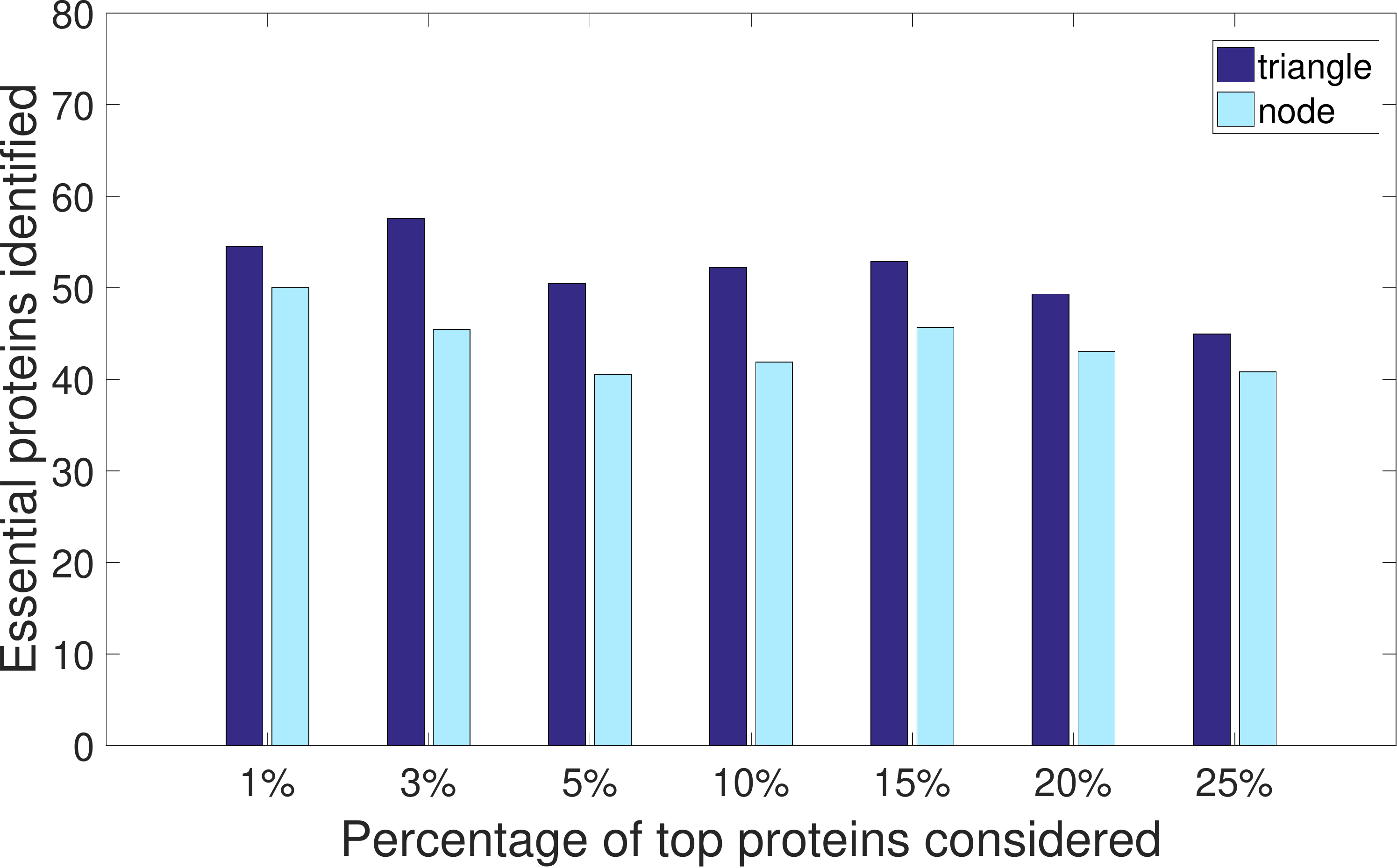}
\par\end{centering}
}\subfloat[]{\begin{centering}
\includegraphics[width=0.45\textwidth]{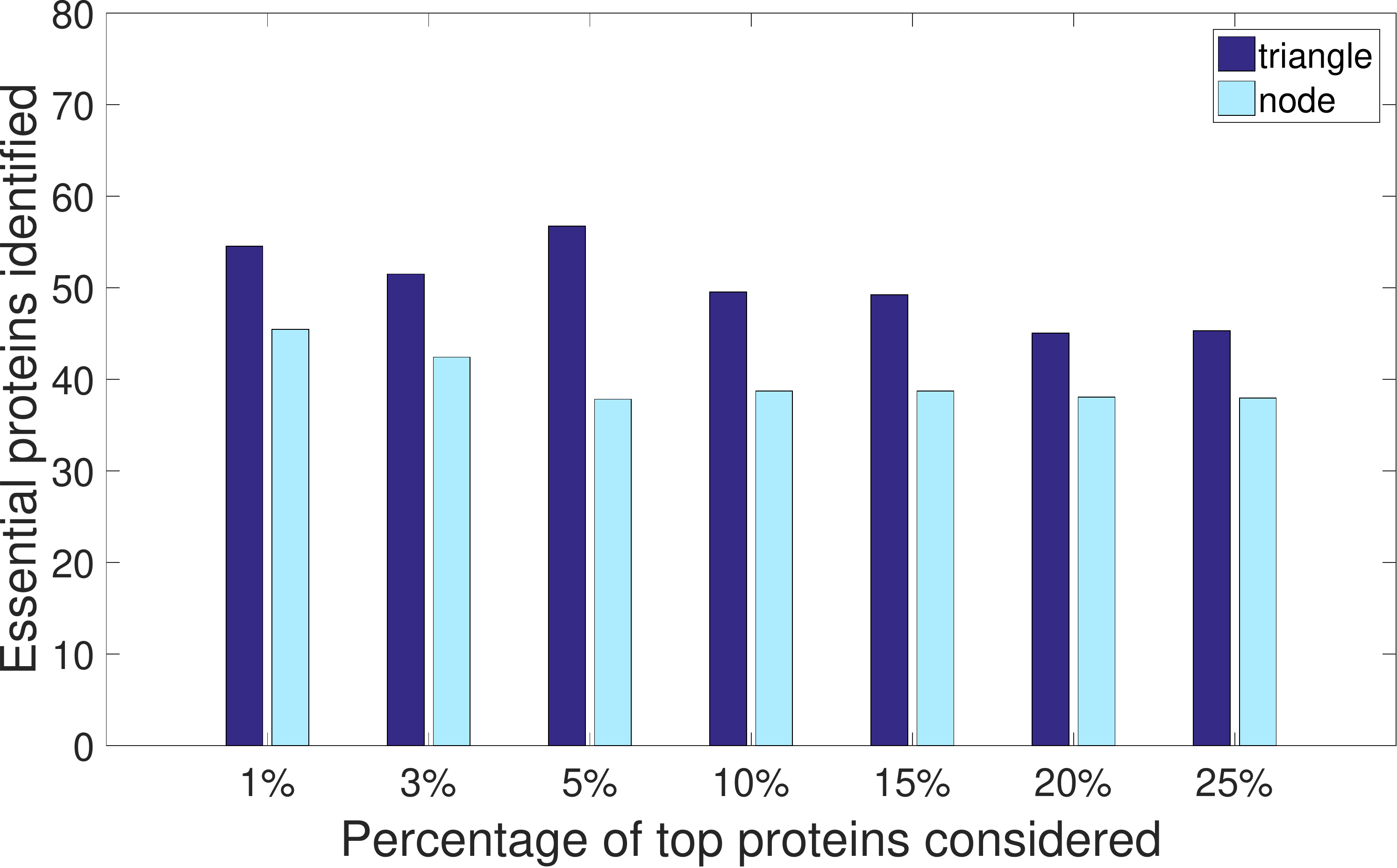}
\par\end{centering}
}

\subfloat[]{\begin{centering}
\includegraphics[width=0.45\textwidth]{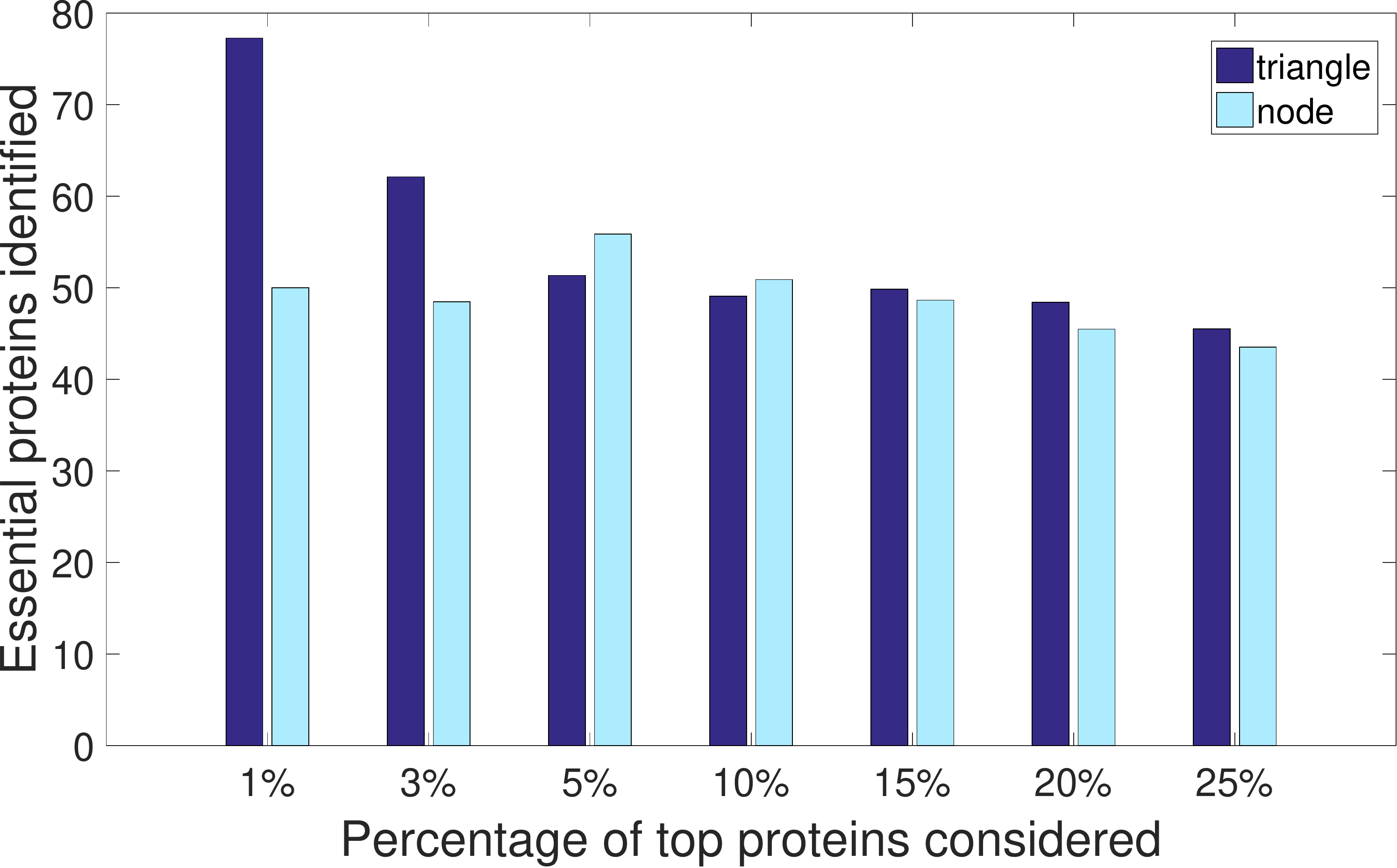}
\par\end{centering}
}\subfloat[]{\begin{centering}
\includegraphics[width=0.45\textwidth]{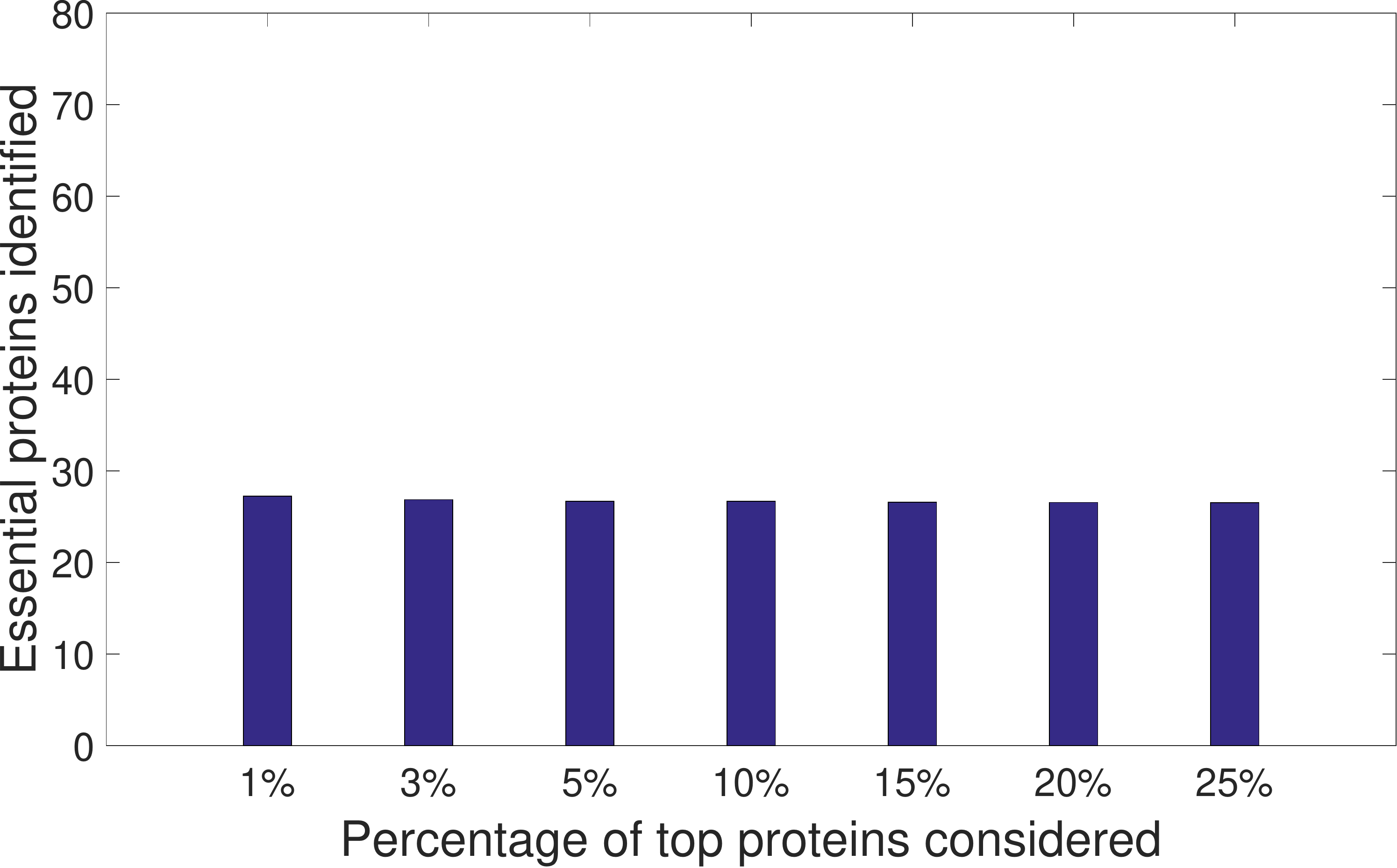}
\par\end{centering}
}

\caption{Percentage of essential proteins identified using simplicial degree
(a), closeness (b), simplicial subgraph centrality (c) and random
selection (d) based on nodes and of triangles of the simplicial complex
for yeast PPI. For the case of the random selection, as the edges
and triangle information is reduced to values for the nodes, it is
only needed the selection of essential proteins based on random ranking
of the nodes.}

\label{centrality_essentiality}
\end{figure}

The first observation which merits to be explained from a theoretical
point of view is the fact that edge centralities do not describe correctly
the essentiality of proteins in the yeast PPI simplicial complex.
This observation clearly indicates that increasing the complexity
of the representation of a complex system, i.e., the PPI of yeast,
does not necessarily implies increasing the amount of information
which is extracted from that system. Here we explain these facts based
on the degree of the edges for the sake of simplicity but the explanation
provided is also valid for all the other centralities studied here.
Let us recall that two edges are considered to be adjacent if and
only if they share a node and do not form part of the same triangle.
Thus, the edge degree is given by $k_{p}+k_{q}-(2+2t)$, where $p$
and $q$ are the nodes forming the edge and $t$ is the number of
triangles that the edge is a face of. Notice that in the graph the
edge degree is simply defined as $k_{p}+k_{q}-2$. Now, the important
thing here is that the edge degree in the simplicial complex depends
on the degree of the nodes forming such edge. In Fig. (\ref{plots_Edges})
we show the scatterplots of the edge centrality indices versus their
node analogues. As can be seen in all cases the correlation is positive
and for the cases of the degree and closeness the correlation between
the two centralities is relatively good. We now analyze the causes
of the differences between the node and edge centralities and how
they influence the lack of ability of edge centrality to identify
essential proteins in the yeast PPI.

\begin{figure}
\subfloat[]{\begin{centering}
\includegraphics[width=0.33\textwidth]{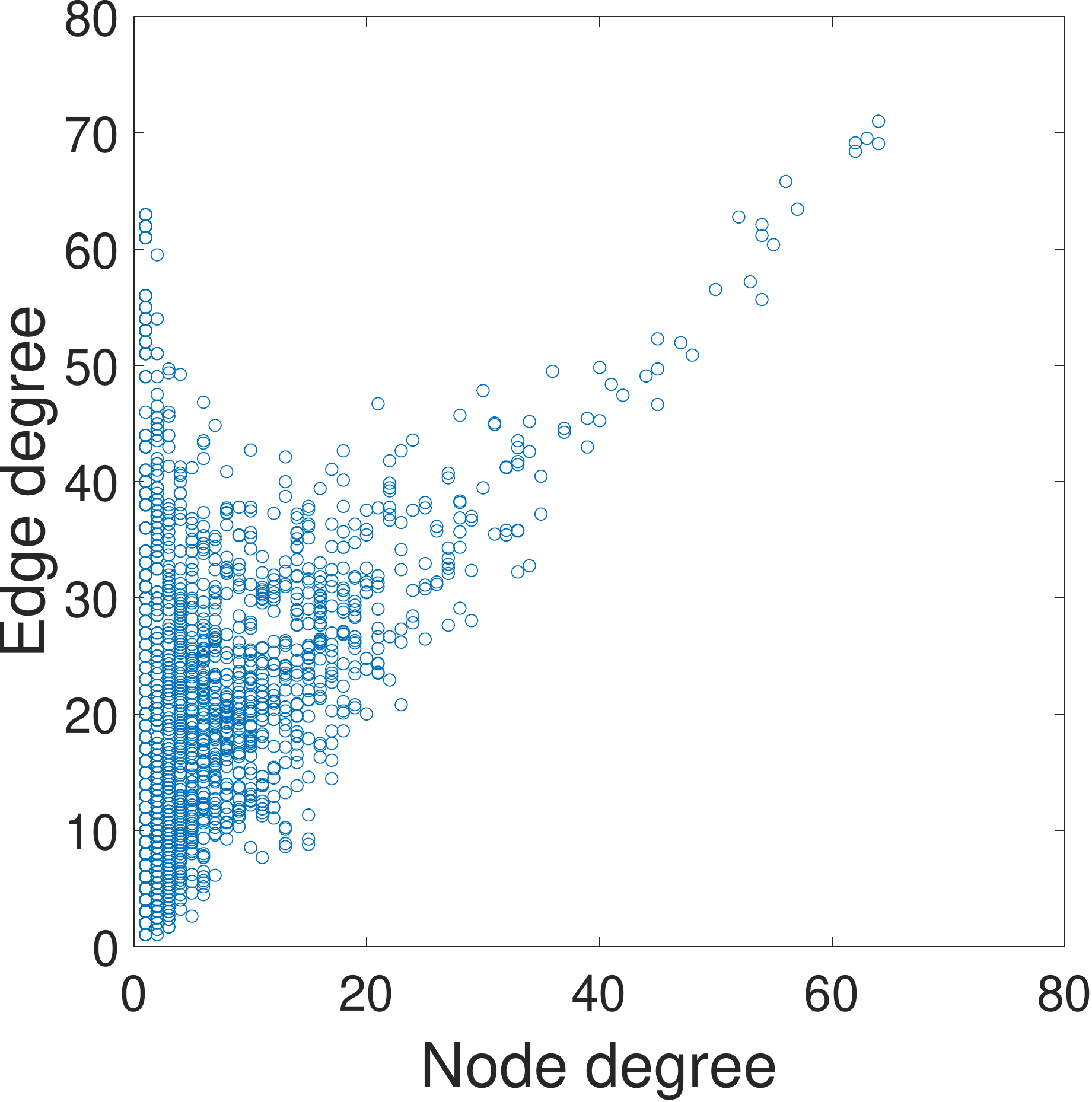}
\par\end{centering}
}\subfloat[]{\begin{centering}
\includegraphics[width=0.33\textwidth]{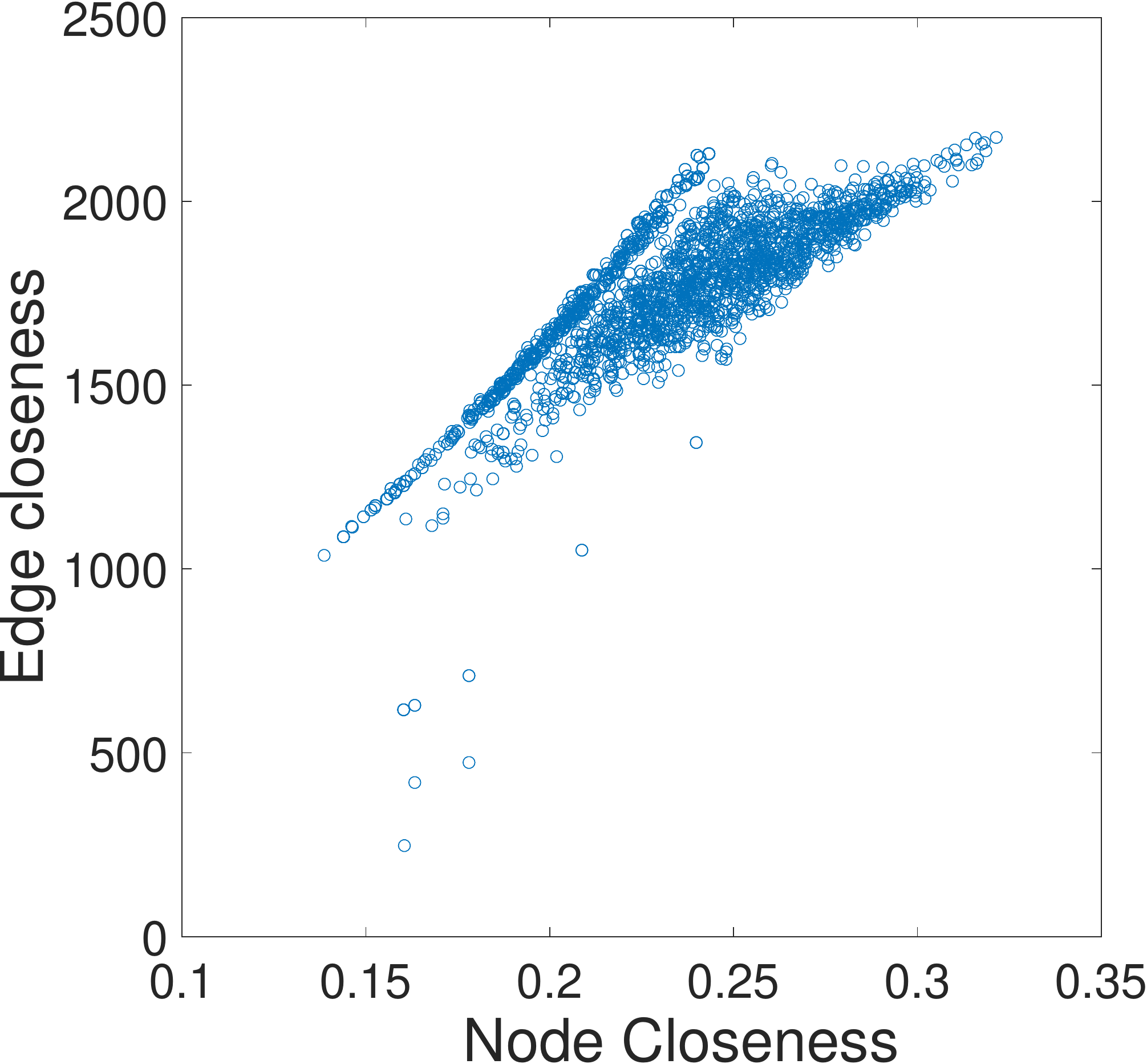}
\par\end{centering}
}\subfloat[]{\begin{centering}
\includegraphics[width=0.31\textwidth]{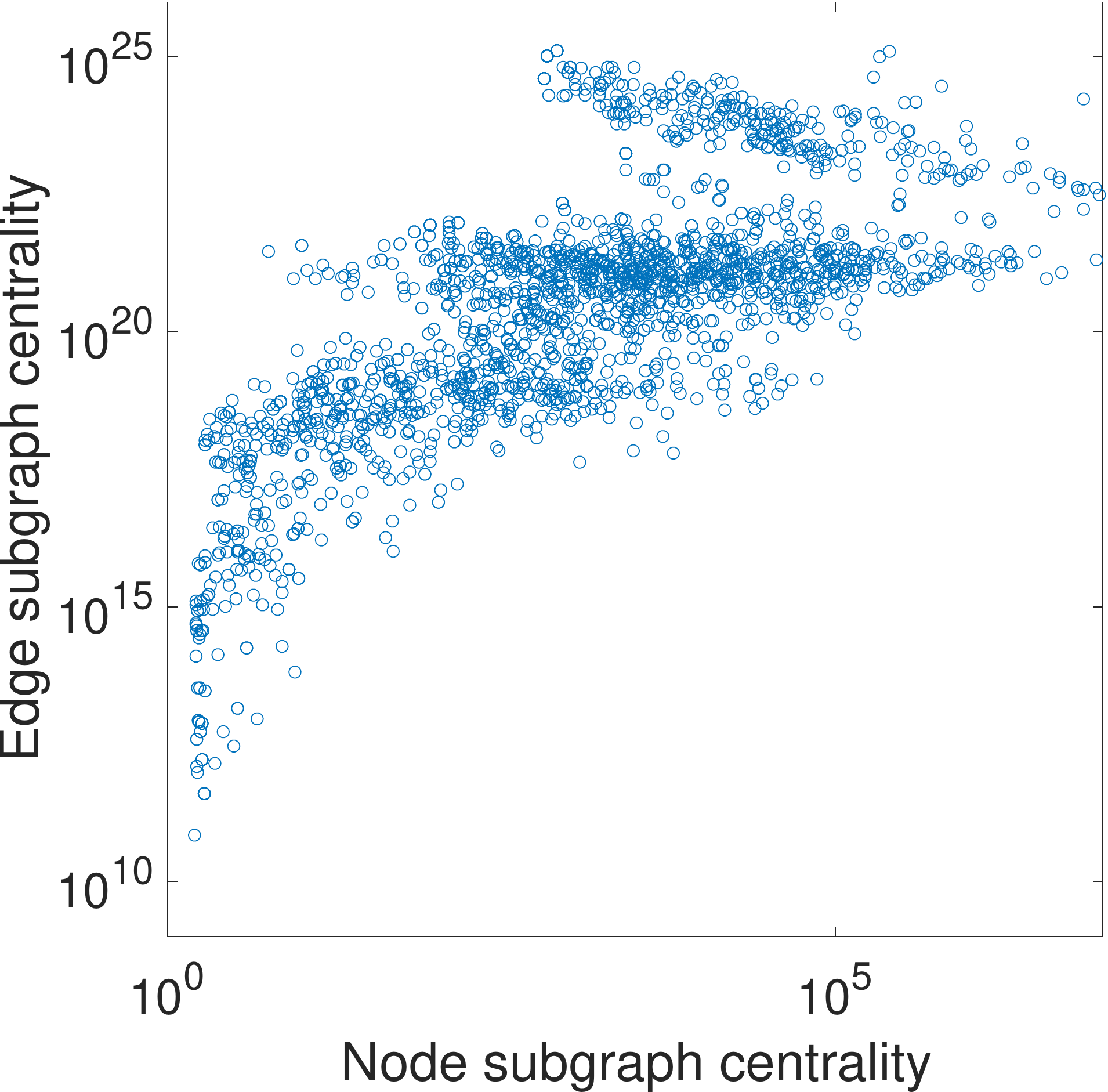}
\par\end{centering}
}

\caption{Scatter plots of the edge centralities, degree (a), closeness (b)
and subgraph centrality (c), versus the node analogues of the same
centralities. Notice that the subgraph centrality plot is in log-log
scale. }

\label{plots_Edges}
\end{figure}

Suppose that the number of triangles that the edge is a face of is
relatively small, such that the degree of the edge is mainly dependent
on the degree of the nodes forming that edge. Then, it is possible
to have two different edges with exactly the same edge degree which
differ significantly in the degree of the nodes forming the edges.
That is, we can have an edge formed by two nodes of mid-degree, e.g.,
MD-MD, and another formed by a high-degree (HD) and a low-degree (LD)
node. It is not difficult to find many of these examples in the yeast
PPI. For instance, the edges YMR125C-YOL139C and YDR386W-YOL139C are
formed by nodes of degrees 39-36 and 31-36, respectively. That is,
these two edges are of the MD-MD type. On the other hand, the edges
YPR110C-YLR086W and YPR110C-YGL016W are formed both by nodes of degrees
64-3, which clearly means that they are of the HD-LD type. It is well-know
that high-degree nodes are more probable to represent essential proteins.
Thus, it is more probable that the edges HD-HD contains an essential
protein than the edges MD-MD. Indeed, neither of the proteins in the
previous example in MD-MD are essential, but the protein YPR110C in
the HD-LD edges is an essential one. Now, the situation is even worse
when the nodes involved in a given edge participate in a large number
of triangles. In this case, twice the number of triangles is subtracted
from the edge degree as we have seen before. Then, if the HD node
involved in an edge is also involved in a large number of triangles,
its edge degree will be relatively small due to the fact that it is
penalized for such participation in triangles. Thus edges which were
HD-HD would be unlikely to have a high edge centrality. The existence
of nodes having low degree but displaying either very low or very
high edge degree is easy to understand. In edges of the HD-LD type,
there is always a node with low degree which displays very large edge
degree due to the influence of the HD node. In those edges where a
LD node is connected to another LD node, both the node and the edge
degree are low. These factors explain very well the failure of all
edge centrality in accounting for the number of essential proteins
in the yeast PPI in a similar way as the node and triangle centralities.

Now we move to the analysis of the triangle centrality indices. The
first thing to be noted is the lack of correlation between the node
and triangle centralities (see Fig. \ref{Node_triangle centralities})
even for the degree centralities. This lack of correlation clearly
indicates that the information contained in one of the indices is
not duplicated by the other. This is confirmed by the fact that the
ranking based on node degree and triangle degree only identify 24
common proteins in the top 100 nodes ranked according to them.

\begin{figure}
\subfloat[]{\begin{centering}
\includegraphics[width=0.33\textwidth]{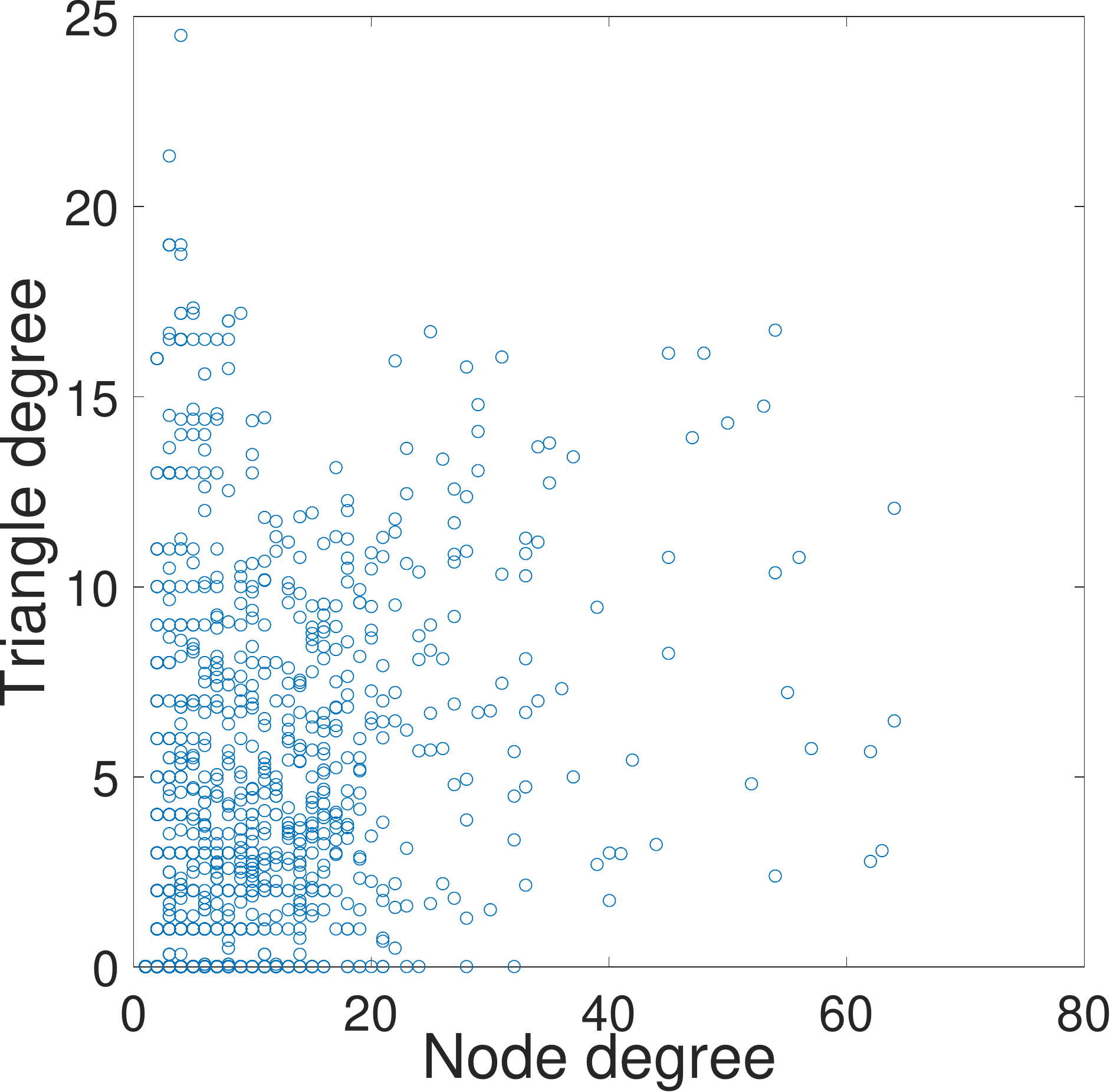}
\par\end{centering}
}\subfloat[]{\begin{centering}
\includegraphics[width=0.33\textwidth]{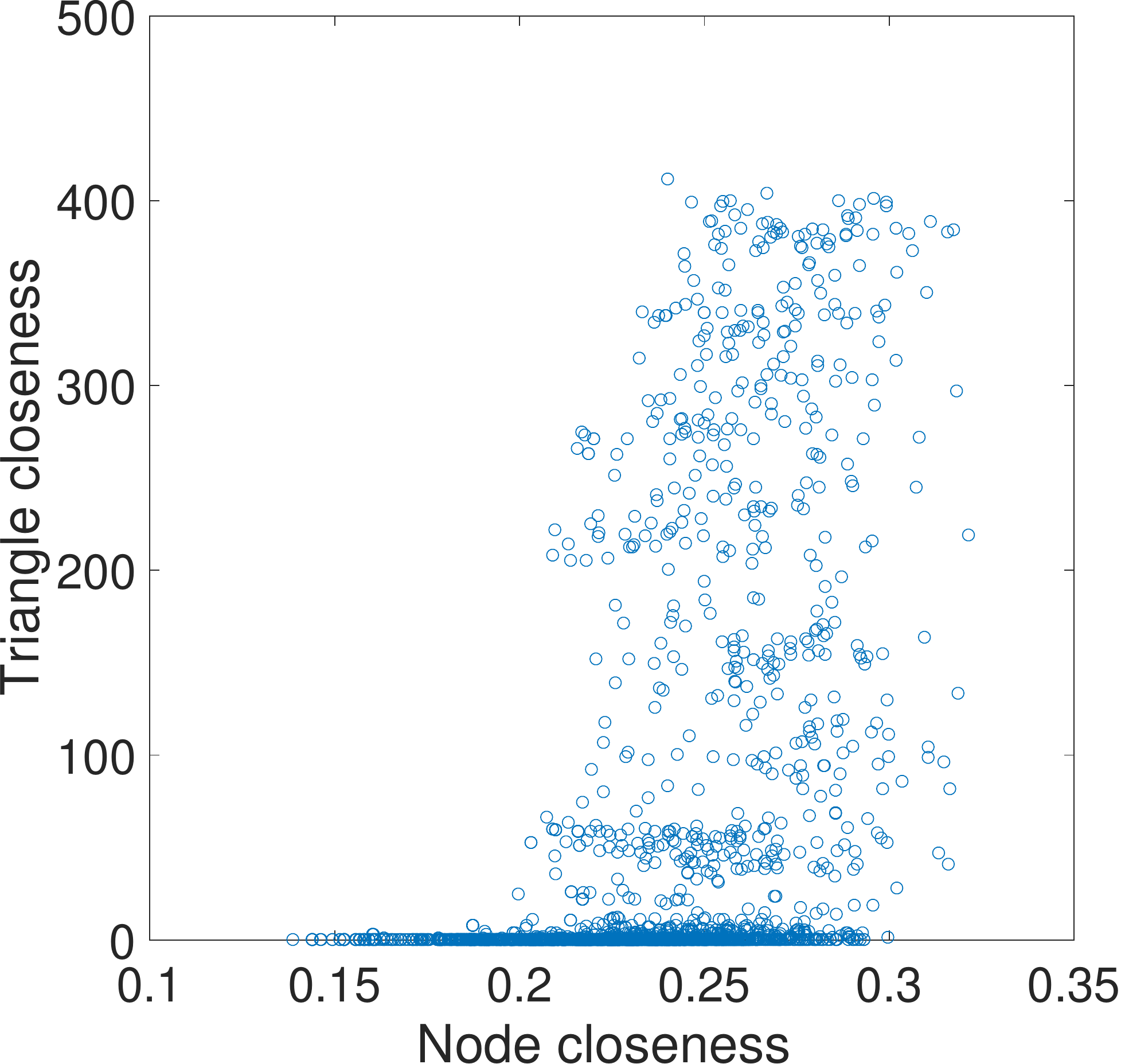}
\par\end{centering}
}\subfloat[]{\begin{centering}
\includegraphics[width=0.33\textwidth]{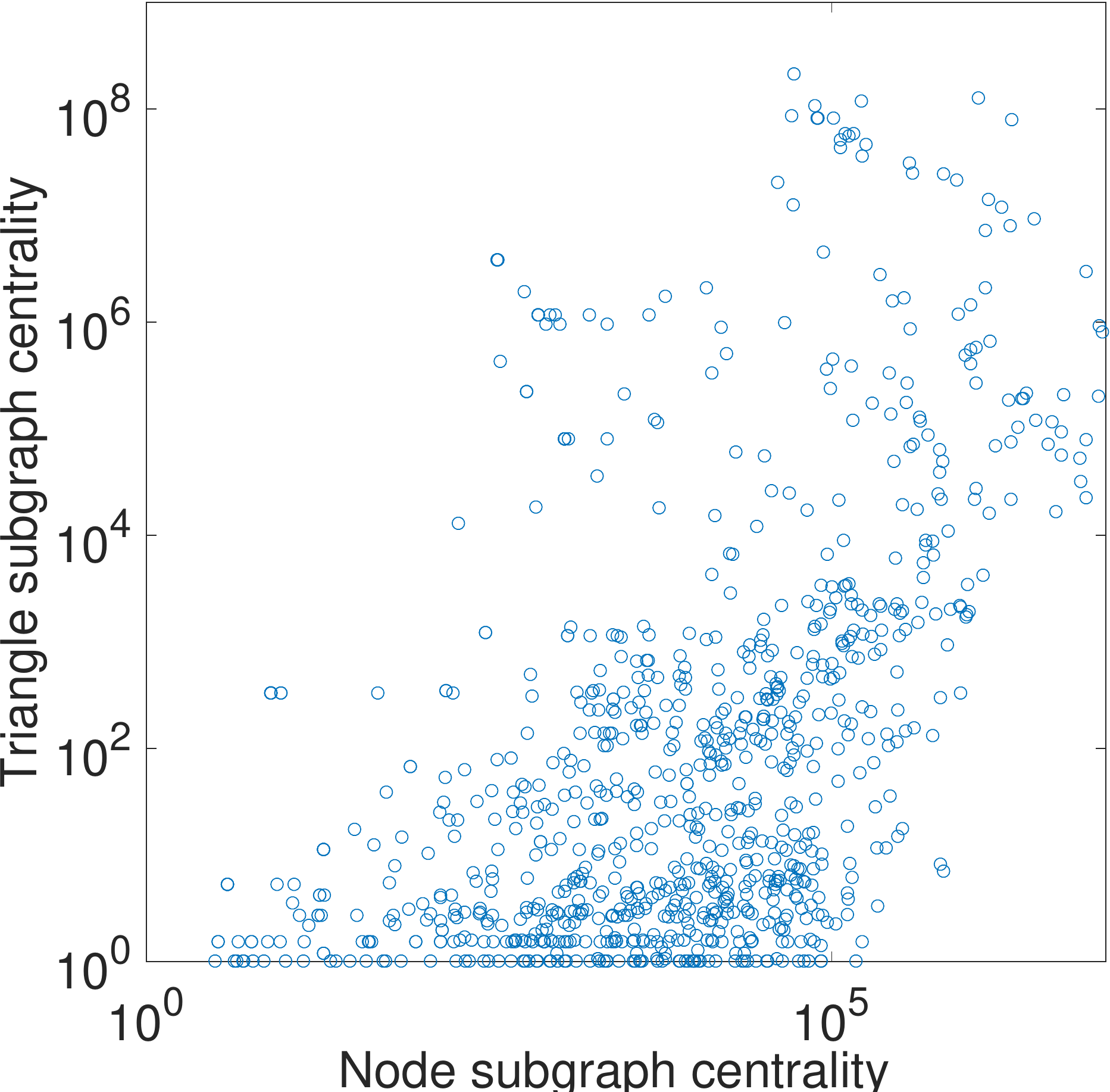}
\par\end{centering}
}

\caption{Scatter plots of the triangle centralities, degree (a), closeness
(b) and subgraph centrality (c), versus the node analogues of the
same centralities. Notice that the subgraph centrality plot is in
log-log scale. }

\label{Node_triangle centralities}
\end{figure}

Because the triangle centralities outperform the node ones in identifying
essential protein, our main goal here is to identify some structural
pattern which contributes to the triangle centralities and do not
contribute to the node ones. In order to perform our analysis we again
consider the degree centralities for the sake of simplicity. We are
only interested in the structural information which is useful for
the identification of essential proteins. The structural pattern that
we identify here consists of a node $A$ which is the vertex of a
relatively small number of triangles, such that its node degree is
small. Suppose for instance that $A$ is connected to the nodes $B$,
$C$, and $D$ forming the triangles $ABC$ and $ACD$. Obviously
the node degree of $A$ is only 3. Now, let us consider that $BC$
is the edge of a large number of triangles, and in a similar way the
edge $CD$. This makes that the triangles $ABC$ and $ACD$ have large
triangle degree and consequently the node $A$ is very central according
to this index. This means that a node is triangle-central not only
if it takes part in a large number of triangles, but also if the edges
of the triangles it forms participate in a large number of triangles.
As a matter of example we provide two complexes displaying exactly
this kind of structural pattern. The first is formed by the protein
YDL148C, which is connected to other proteins, namely YGR090W, YBR247C,
YCL059C and YCR057C. These proteins form 5 triangles in which YDL148C
is a vertex. Then, obviously, the protein YDL148C is not very central
according to this nearest-neighbor structure, i.e., its node degree
is only 4 and it participates in only 5 triangles. However, the edges
of these 5 triangles participate in a total of 88 other triangles.
That is, the edge YGR090W\textendash YBR247C takes place in 11 other
triangles, YGR090W\textendash YCR057C in 22, YBR247C\textendash YCL059C
in 14, YCL059C\textendash YCR057C in 25, and YBR247C\textendash YCR057C
in 16. The protein YDL148C is then very central according to the triangle
centrality. This protein is indeed an essential one. Another example
is provided by the protein YMR112C, which is also essential and is
connected only to YDL005C, YOL135C, YBR253W and YJR068W. It forms
only 4 triangles, but the edges of these triangles form 14, 15, 17
and 19 other triangles, respectively. Thus, the protein YMR112C which
is not central according to node centrality is one of the most central
ones according to the triangle centrality indices.

\begin{figure}
\subfloat[]{\centering{}\includegraphics[width=0.5\textwidth]{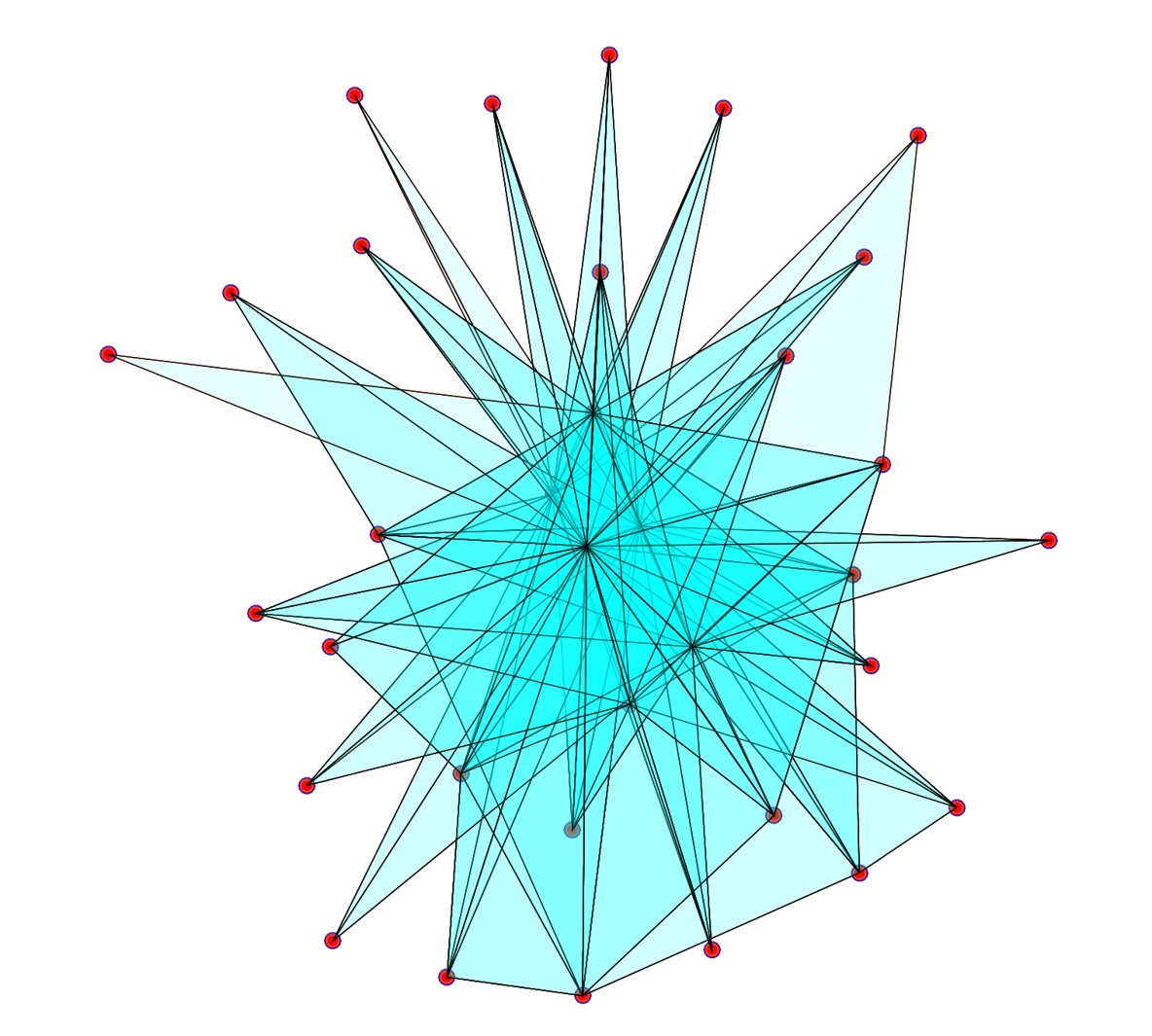}}
\subfloat[]{\centering{}\includegraphics[width=0.5\textwidth]{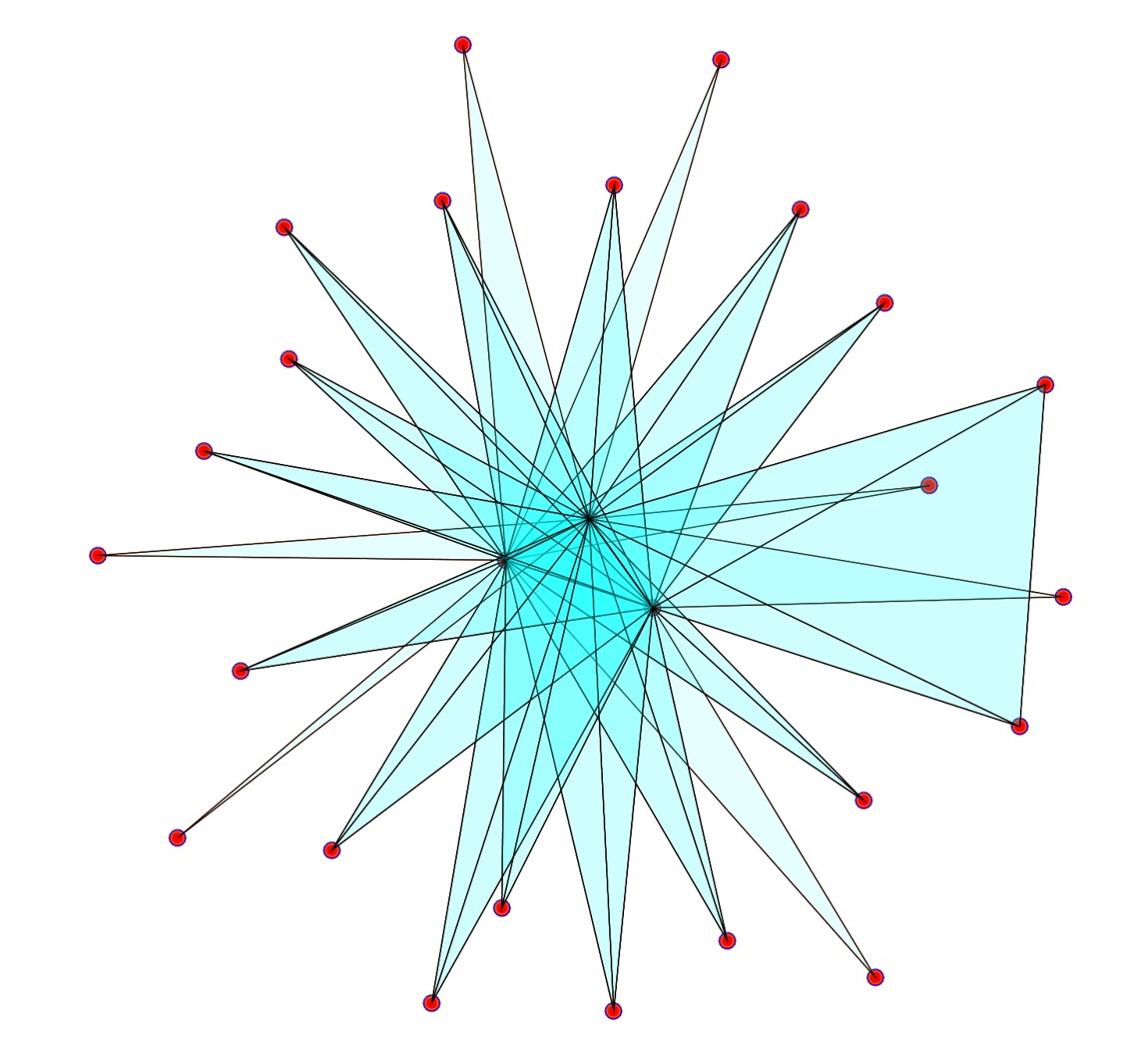}}

\caption{Illustration of the two simplicial complexes in formed by the proteins
YDL148C (a) and YMR112C (b). }
\end{figure}

It should also be noted that there is structural information contained
in the node centralities which is not accounted for by the triangle
ones. As we have seen before there are proteins with high node centrality
and low triangle centrality. However, the number of structural patterns
contributing to this situation is wider and range from the simplest
case when a protein interacts with a large number of other proteins
which do not interact among each other, to the cases in which a central
protein forms a wheel-like structure. In the first case obviously
the protein has a high degree but its triangle degree is zero. In
the second case\textemdash where a central node is connected to every
node of a cycle having $n-1$ nodes, the central node has degree $n-1$
but every triangle has degree only two. In closing, the important
message of this section is the the triangle centrality include some
structural information which is relevant for understanding biological
processes such as the essentiality of proteins in the yeast PPI.

\section{Conclusion}

We have developed here the mathematical framework for the analysis
of centralities in simplicial complexes representing the topological
structure of complex systems. We also provide sufficient examples
on the use of these centrality indices for the analysis of the structure
of simplicial complexes emerging from PPI networks. The main conclusion
of this work is that the understanding of the centrality of simplices
at one level of the simplicial complex, e.g., node centralities, does
not necessarily implies the understanding of the same centrality at
another level, e.g., edges or triangles. This conclusion can have
important consequence for the analysis not only of the structure of
simplicial complexes but also for understanding dynamical processes
taking place on them. We have provided here evidence that has shown
that (i) the ranking of nodes, edges and triangles according to a
given centrality measure can differ significantly for certain simplicial
complexes, (ii) the node, edge, and triangle degrees can display very
significant distributions in some simplicial complexes.We hope that
this work opens these new research avenues for a better understanding
of complex systems.

\section{Acknowledgements}

GR would like to thank the EPSRC for funding his studentship.

\bibliographystyle{apalike}
\bibliography{References1}

\begin{thebibliography}{}

\bibitem[Bavelas, 1950]{bavelas1950communication}
Bavelas, A. (1950).
\newblock Communication patterns in task-oriented groups.
\newblock {\em The Journal of the Acoustical Society of America},
  22(6):725--730.

\bibitem[Bu et~al., 2003]{bu2003topological}
Bu, D., Zhao, Y., Cai, L., Xue, H., Zhu, X., Lu, H., Zhang, J., Sun, S., Ling,
  L., Zhang, N., et~al. (2003).
\newblock Topological structure analysis of the protein--protein interaction
  network in budding yeast.
\newblock {\em Nucleic acids research}, 31(9):2443--2450.

\bibitem[Butland et~al., 2005]{butland2005interaction}
Butland, G., Peregr{\'\i}n-Alvarez, J.~M., Li, J., Yang, W., Yang, X.,
  Canadien, V., Starostine, A., Richards, D., Beattie, B., Krogan, N., et~al.
  (2005).
\newblock Interaction network containing conserved and essential protein
  complexes in escherichia coli.
\newblock {\em Nature}, 433(7025):531--537.

\bibitem[Cang et~al., 2015]{cang2015topological}
Cang, Z., Mu, L., Wu, K., Opron, K., Xia, K., and Wei, G.-W. (2015).
\newblock A topological approach for protein classification.
\newblock {\em Molecular Based Mathematical Biology}, 3(1).

\bibitem[Courtney and Bianconi, 2016]{courtney2016generalized}
Courtney, O.~T. and Bianconi, G. (2016).
\newblock Generalized network structures: The configuration model and the
  canonical ensemble of simplicial complexes.
\newblock {\em Physical Review E}, 93(6):062311.

\bibitem[De~Silva and Ghrist, 2007]{de2007coverage}
De~Silva, V. and Ghrist, R. (2007).
\newblock Coverage in sensor networks via persistent homology.
\newblock {\em Algebraic \& Geometric Topology}, 7(1):339--358.

\bibitem[De~Silva et~al., 2005]{de2005blind}
De~Silva, V., Ghrist, R., and Muhammad, A. (2005).
\newblock Blind swarms for coverage in 2-d.
\newblock In {\em Robotics: Science and Systems}, pages 335--342.

\bibitem[Estrada, 2006]{estrada2006virtual}
Estrada, E. (2006).
\newblock Virtual identification of essential proteins within the protein
  interaction network of yeast.
\newblock {\em Proteomics}, 6(1):35--40.

\bibitem[Estrada, 2010]{estrada2010degreehet}
Estrada, E. (2010).
\newblock Quantifying network heterogeneity.
\newblock {\em Phys. Rev. E}, 82:066102.

\bibitem[Estrada et~al., 2015]{estrada2015first}
Estrada, E., Knight, P.~A., and Knight, P. (2015).
\newblock {\em A first course in network theory}.
\newblock Oxford University Press, USA.

\bibitem[Estrada and Rodriguez-Velazquez, 2005]{estrada2005subgraph}
Estrada, E. and Rodriguez-Velazquez, J.~A. (2005).
\newblock Subgraph centrality in complex networks.
\newblock {\em Physical Review E}, 71(5):056103.

\bibitem[Freeman, 1977]{freeman1977set}
Freeman, L.~C. (1977).
\newblock A set of measures of centrality based on betweenness.
\newblock {\em Sociometry}, pages 35--41.

\bibitem[Ghrist and Muhammad, 2005]{ghrist2005coverage}
Ghrist, R. and Muhammad, A. (2005).
\newblock Coverage and hole-detection in sensor networks via homology.
\newblock In {\em Information Processing in Sensor Networks, 2005. IPSN 2005.
  Fourth International Symposium on}, pages 254--260. IEEE.

\bibitem[Giot et~al., 2003]{giot2003protein}
Giot, L., Bader, J.~S., Brouwer, C., Chaudhuri, A., Kuang, B., Li, Y., Hao, Y.,
  Ooi, C., Godwin, B., Vitols, E., et~al. (2003).
\newblock A protein interaction map of drosophila melanogaster.
\newblock {\em science}, 302(5651):1727--1736.

\bibitem[Giusti et~al., 2016]{giusti2016two}
Giusti, C., Ghrist, R., and Bassett, D.~S. (2016).
\newblock Two's company, three (or more) is a simplex.
\newblock {\em Journal of computational neuroscience}, 41(1):1--14.

\bibitem[Goldberg, 2002]{goldberg2002combinatorial}
Goldberg, T.~E. (2002).
\newblock Combinatorial laplacians of simplicial complexes.
\newblock {\em Senior Thesis, Bard College}.

\bibitem[Gustafson et~al., 2006]{gustafson2006towards}
Gustafson, A.~M., Snitkin, E.~S., Parker, S.~C., DeLisi, C., and Kasif, S.
  (2006).
\newblock Towards the identification of essential genes using targeted genome
  sequencing and comparative analysis.
\newblock {\em Bmc Genomics}, 7(1):265.

\bibitem[Horak et~al., 2009]{horak2009persistent}
Horak, D., Maleti{\'c}, S., and Rajkovi{\'c}, M. (2009).
\newblock Persistent homology of complex networks.
\newblock {\em Journal of Statistical Mechanics: Theory and Experiment},
  2009(03):P03034.

\bibitem[Kass and Raftery, 1995]{kass1995bayes}
Kass, R.~E. and Raftery, A.~E. (1995).
\newblock Bayes factors.
\newblock {\em Journal of the american statistical association},
  90(430):773--795.

\bibitem[Katz, 1953]{katz1953new}
Katz, L. (1953).
\newblock A new status index derived from sociometric analysis.
\newblock {\em Psychometrika}, 18(1):39--43.

\bibitem[Kee et~al., 2016]{kee2016information}
Kee, K.~F., Sparks, L., Struppa, D.~C., Mannucci, M.~A., and Damiano, A.
  (2016).
\newblock Information diffusion, facebook clusters, and the simplicial model of
  social aggregation: a computational simulation of simplicial diffusers for
  community health interventions.
\newblock {\em Health communication}, 31(4):385--399.

\bibitem[Konishi and Kitagawa, 2008]{konishi2008information}
Konishi, S. and Kitagawa, G. (2008).
\newblock {\em Information criteria and statistical modeling}.
\newblock Springer Science \& Business Media.

\bibitem[LaCount et~al., 2005]{lacount2005protein}
LaCount, D.~J., Vignali, M., Chettier, R., Phansalkar, A., et~al. (2005).
\newblock A protein interaction network of the malaria parasite plasmodium
  falciparum.
\newblock {\em Nature}, 438(7064):103.

\bibitem[Lee et~al., 2012]{lee2012persistent}
Lee, H., Kang, H., Chung, M.~K., Kim, B.-N., and Lee, D.~S. (2012).
\newblock Persistent brain network homology from the perspective of dendrogram.
\newblock {\em IEEE transactions on medical imaging}, 31(12):2267--2277.

\bibitem[Lin et~al., 2004]{lin2004hp}
Lin, C.-Y., Chen, C.-L., Cho, C.-S., Wang, L.-M., Chang, C.-M., Chen, P.-Y.,
  Lo, C.-Z., and Hsiung, C.~A. (2004).
\newblock hp-dpi: Helicobacter pylori database of protein interactomes
  embracing-experimental and inferred interactions.
\newblock {\em Bioinformatics}, 21(7):1288--1290.

\bibitem[Maleti{\'c} and Rajkovi{\'c}, 2009]{maletic2009simplicial}
Maleti{\'c}, S. and Rajkovi{\'c}, M. (2009).
\newblock Simplicial complex of opinions on scale-free networks.
\newblock In {\em Complex networks}, pages 127--134. Springer.

\bibitem[Maleti{\'c} and Rajkovi{\'c}, 2012]{maletic2012combinatorial}
Maleti{\'c}, S. and Rajkovi{\'c}, M. (2012).
\newblock Combinatorial laplacian and entropy of simplicial complexes
  associated with complex networks.
\newblock {\em The European Physical Journal Special Topics}, 212(1):77--97.

\bibitem[Maleti{\'c} and Rajkovi{\'c}, 2014]{maletic2014consensus}
Maleti{\'c}, S. and Rajkovi{\'c}, M. (2014).
\newblock Consensus formation on a simplicial complex of opinions.
\newblock {\em Physica A: Statistical Mechanics and its Applications},
  397:111--120.

\bibitem[Motz et~al., 2002]{motz2002elucidation}
Motz, M., Kober, I., Girardot, C., Loeser, E., Bauer, U., Albers, M., Moeckel,
  G., Minch, E., Voss, H., Kilger, C., et~al. (2002).
\newblock Elucidation of an archaeal replication protein network to generate
  enhanced pcr enzymes.
\newblock {\em Journal of Biological Chemistry}, 277(18):16179--16188.

\bibitem[Muhammad and Egerstedt, 2006]{muhammad2006control}
Muhammad, A. and Egerstedt, M. (2006).
\newblock Control using higher order laplacians in network topologies.
\newblock In {\em Proc. of 17th International Symposium on Mathematical Theory
  of Networks and Systems}, pages 1024--1038.

\bibitem[Muhammad and Jadbabaie, 2007]{muhammad2007decentralized}
Muhammad, A. and Jadbabaie, A. (2007).
\newblock Decentralized computation of homology groups in networks by gossip.
\newblock In {\em American Control Conference, 2007. ACC'07}, pages 3438--3443.
  IEEE.

\bibitem[Noirot and Noirot-Gros, 2004]{noirot2004protein}
Noirot, P. and Noirot-Gros, M.-F. (2004).
\newblock Protein interaction networks in bacteria.
\newblock {\em Current opinion in microbiology}, 7(5):505--512.

\bibitem[Petri et~al., 2014]{petri2014homological}
Petri, G., Expert, P., Turkheimer, F., Carhart-Harris, R., Nutt, D., Hellyer,
  P.~J., and Vaccarino, F. (2014).
\newblock Homological scaffolds of brain functional networks.
\newblock {\em Journal of The Royal Society Interface}, 11(101):20140873.

\bibitem[Pirino et~al., 2015]{pirino2015topological}
Pirino, V., Riccomagno, E., Martinoia, S., and Massobrio, P. (2015).
\newblock A topological study of repetitive co-activation networks in in vitro
  cortical assemblies.
\newblock {\em Physical biology}, 12(1):016007.

\bibitem[Rain et~al., 2001]{rain2001protein}
Rain, J.-C., Selig, L., De~Reuse, H., Battaglia, V., et~al. (2001).
\newblock The protein-protein interaction map of helicobacter pylori.
\newblock {\em Nature}, 409(6817):211.

\bibitem[Rochat, 2009]{rochat2009closeness}
Rochat, Y. (2009).
\newblock Closeness centrality extended to unconnected graphs: The harmonic
  centrality index.
\newblock In {\em ASNA}, number EPFL-CONF-200525.

\bibitem[Rual et~al., 2005]{rual2005towards}
Rual, J.-F., Venkatesan, K., Tong, H., Hirozane-Kishikawa, T., et~al. (2005).
\newblock Towards a proteome-scale map of the human protein-protein interaction
  network.
\newblock {\em Nature}, 437(7062):1173.

\bibitem[Seringhaus et~al., 2006]{seringhaus2006predicting}
Seringhaus, M., Paccanaro, A., Borneman, A., Snyder, M., and Gerstein, M.
  (2006).
\newblock Predicting essential genes in fungal genomes.
\newblock {\em Genome research}, 16(9):1126--1135.

\bibitem[Sizemore et~al., 2016]{sizemore2016classification}
Sizemore, A., Giusti, C., and Bassett, D.~S. (2016).
\newblock Classification of weighted networks through mesoscale homological
  features.
\newblock {\em Journal of Complex Networks}, page cnw013.

\bibitem[Stumpf and Ingram, 2005]{stumpf2005probability}
Stumpf, M.~P. and Ingram, P.~J. (2005).
\newblock Probability models for degree distributions of protein interaction
  networks.
\newblock {\em EPL (Europhysics Letters)}, 71(1):152--158.

\bibitem[Symonds and Moussalli, 2011]{symonds2011brief}
Symonds, M.~R. and Moussalli, A. (2011).
\newblock A brief guide to model selection, multimodel inference and model
  averaging in behavioural ecology using akaike's information criterion.
\newblock {\em Behavioral Ecology and Sociobiology}, 65(1):13--21.

\bibitem[Tahbaz-Salehi and Jadbabaie, 2010]{tahbaz2010distributed}
Tahbaz-Salehi, A. and Jadbabaie, A. (2010).
\newblock Distributed coverage verification in sensor networks without location
  information.
\newblock {\em IEEE Transactions on Automatic Control}, 55(8):1837--1849.

\bibitem[Uetz et~al., 2006]{uetz2006herpesviral}
Uetz, P., Dong, Y.-A., Zeretzke, C., Atzler, C., Baiker, A., Berger, B.,
  Rajagopala, S.~V., Roupelieva, M., Rose, D., Fossum, E., et~al. (2006).
\newblock Herpesviral protein networks and their interaction with the human
  proteome.
\newblock {\em Science}, 311(5758):239--242.

\bibitem[Von~Mering et~al., 2002]{von2002comparative}
Von~Mering, C., Krause, R., Snel, B., Cornell, M., et~al. (2002).
\newblock Comparative assessment of large-scale data sets of protein-protein
  interactions.
\newblock {\em Nature}, 417(6887):399.

\bibitem[Wang et~al., 2012]{wang2012identification}
Wang, J., Li, M., Wang, H., and Pan, Y. (2012).
\newblock Identification of essential proteins based on edge clustering
  coefficient.
\newblock {\em IEEE/ACM Transactions on Computational Biology and
  Bioinformatics}, 9(4):1070--1080.

\bibitem[Xia and Wei, 2014]{xia2014persistent}
Xia, K. and Wei, G.-W. (2014).
\newblock Persistent homology analysis of protein structure, flexibility, and
  folding.
\newblock {\em International journal for numerical methods in biomedical
  engineering}, 30(8):814--844.

\bibitem[Xia and Wei, 2015]{xia2015multidimensional}
Xia, K. and Wei, G.-W. (2015).
\newblock Multidimensional persistence in biomolecular data.
\newblock {\em Journal of computational chemistry}, 36(20):1502--1520.

\bibitem[Yu et~al., 2007]{yu2007importance}
Yu, H., Kim, P.~M., Sprecher, E., Trifonov, V., and Gerstein, M. (2007).
\newblock The importance of bottlenecks in protein networks: correlation with
  gene essentiality and expression dynamics.
\newblock {\em PLoS computational biology}, 3(4):e59.

\end{thebibliography}

\pagebreak{}

\section*{Appendix. Degree distributions}
\begin{center}
\begin{longtable}{|>{\centering}m{2cm}|>{\centering}m{9cm}|>{\centering}m{6cm}|}
\hline
Distribution & PDF & Plot\tabularnewline
\hline
\hline
gen-Pareto & $y=\left(\dfrac{1}{\sigma}\right)\left(1+k\dfrac{\left(x-\theta\right)}{\sigma}\right)^{-1-1/k}$

\medskip{}

$k$: tail index (shape) parameter

$\sigma$: scale parameter

$\mu$: threshold (location) parameter & \includegraphics[width=0.33\textwidth]{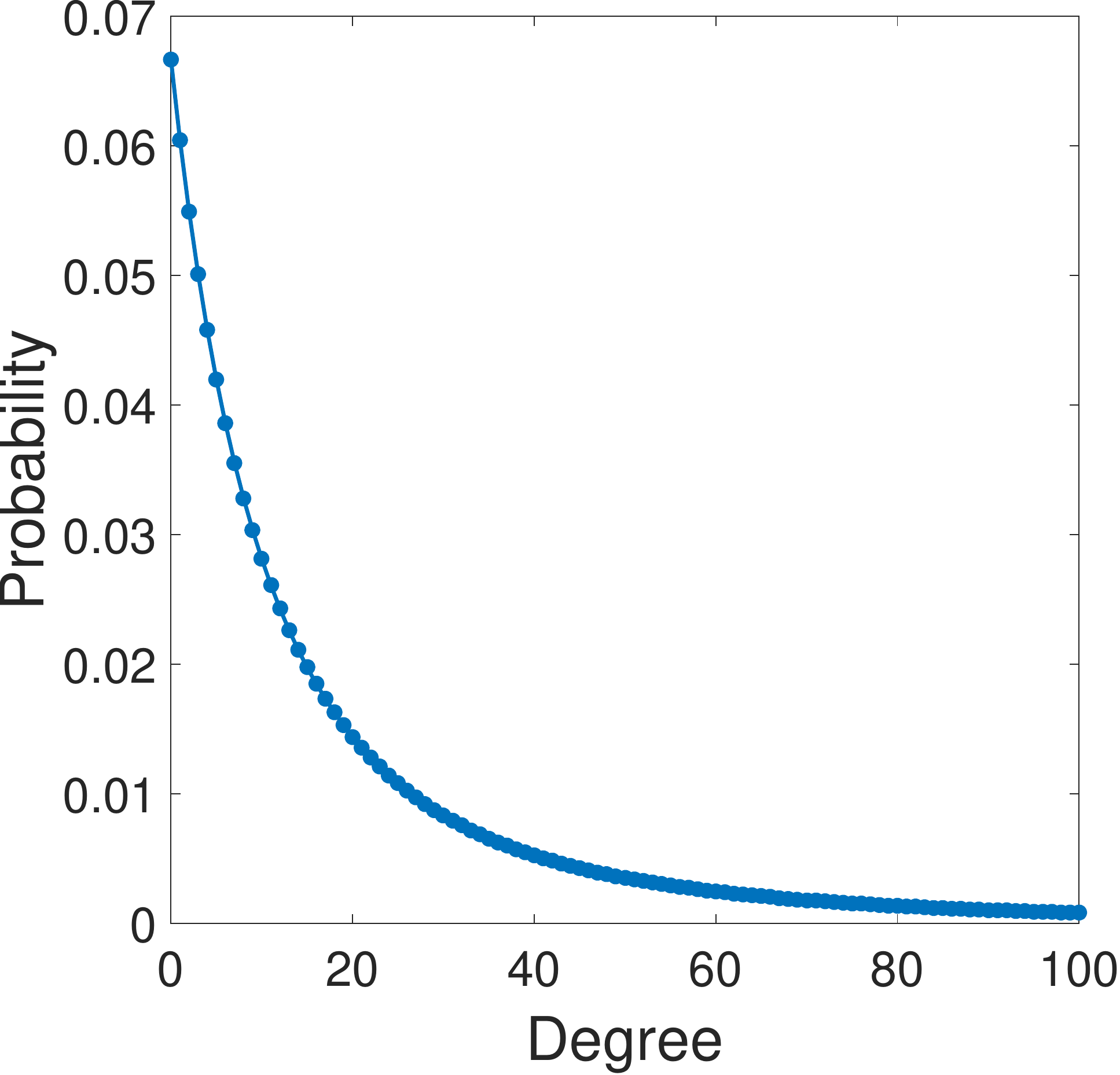}

$k=0.5$, $\sigma=15$, $\mu=0$\tabularnewline
\hline
GVE & $y=\left(\dfrac{1}{\sigma}\right)\exp\left(-\left(1+k\dfrac{\left(x-\mu\right)}{\sigma}\right)^{-1/k}\right)\left(1+k\dfrac{\left(x-\mu\right)}{\sigma}\right)^{-1-1/k}$

\medskip{}

$k$: shape parameter

$\sigma$: scale parameter

$\mu$: location parameter & \includegraphics[width=0.33\textwidth]{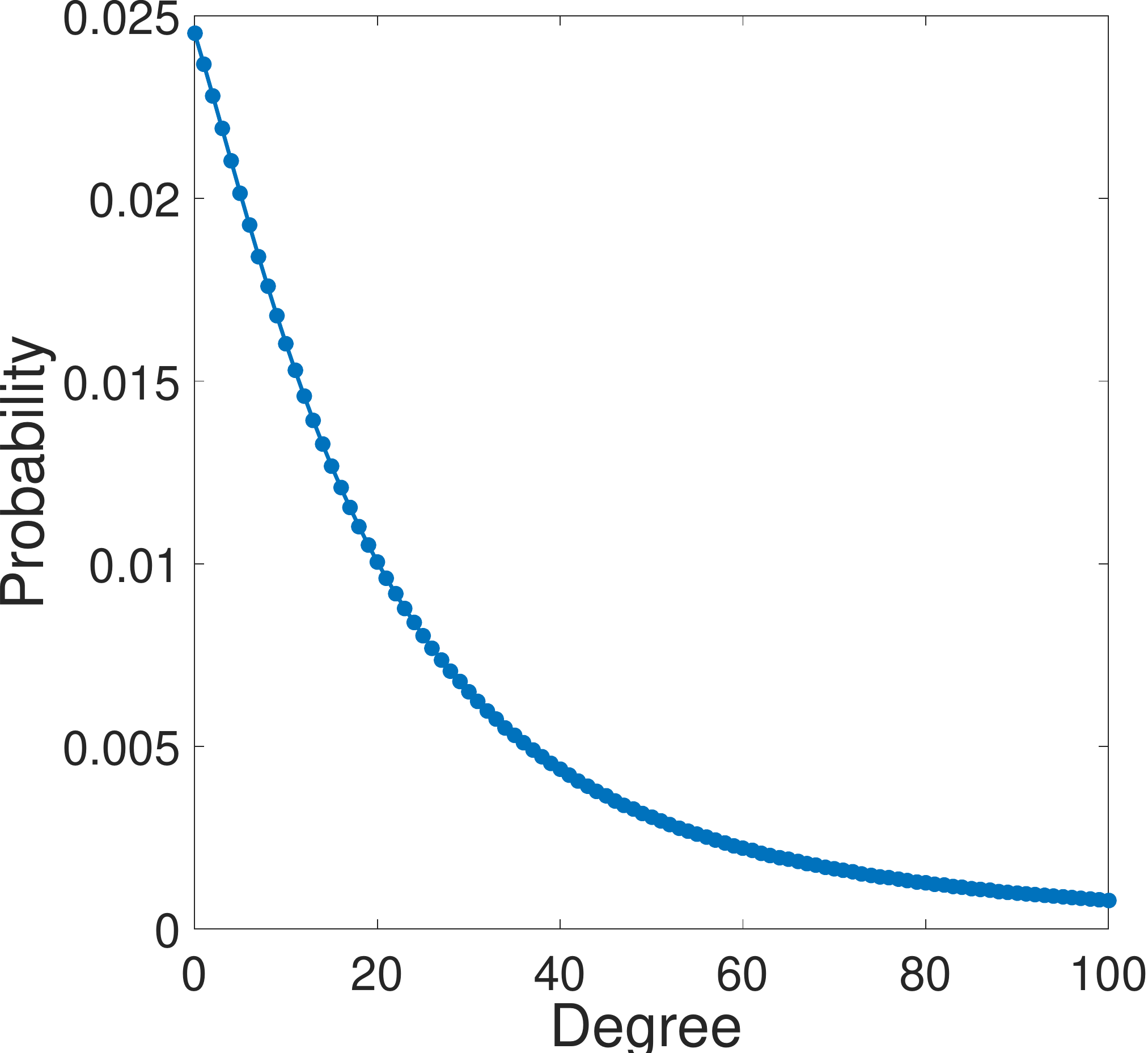}

$k=0.5$, $\sigma=15$, $\mu=0$\tabularnewline
\hline
gamma & $y=\dfrac{x^{a-1}}{b^{a}\Gamma\left(a\right)}\exp\left(-\dfrac{x}{b}\right)$,

\medskip{}

$a$: shape parameter

$b$: scale parameter & \includegraphics[width=0.33\textwidth]{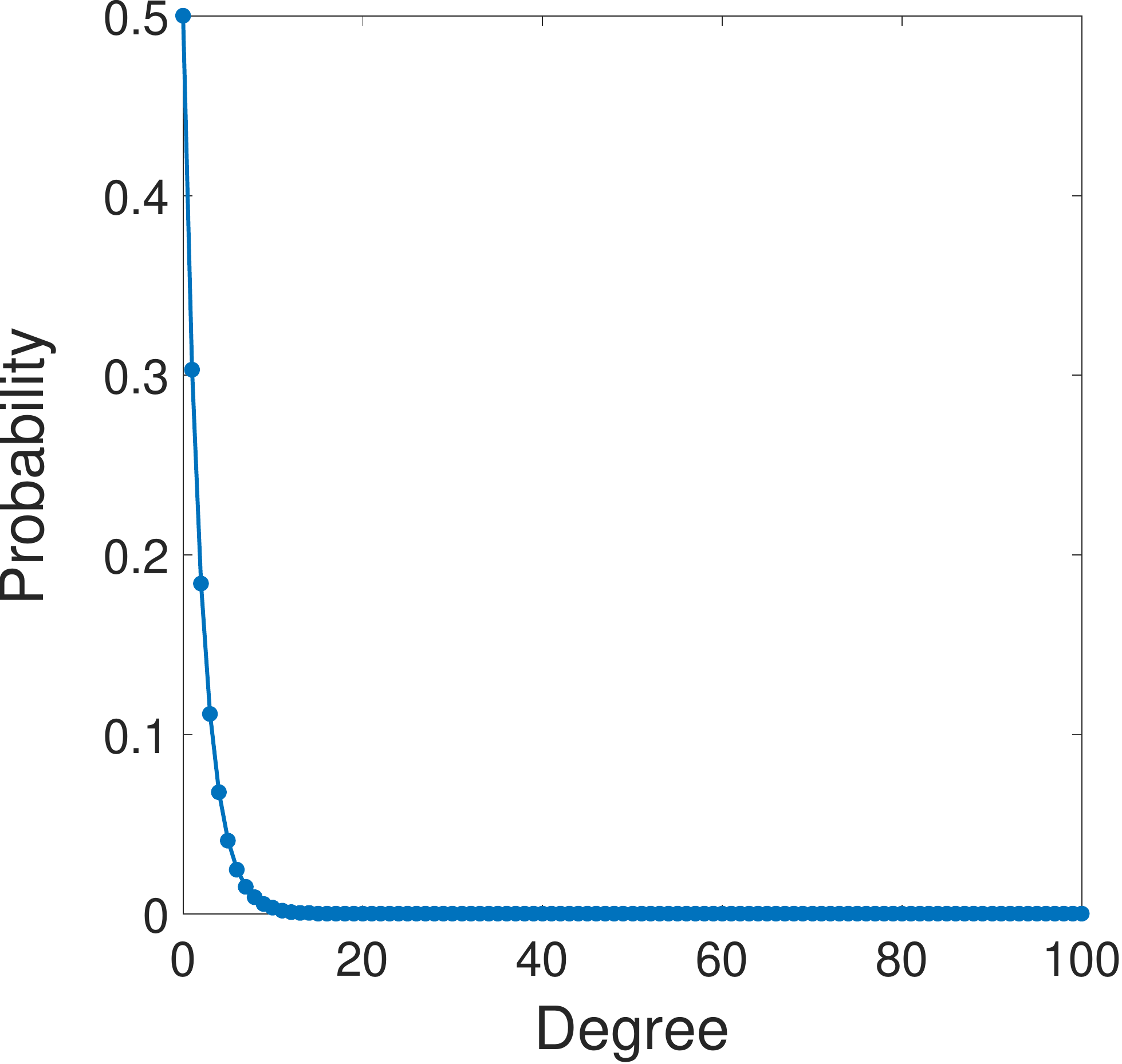}

$a=1$, $b=2$\tabularnewline
\hline
\end{longtable}
\par\end{center}
\end{document}